\newtheorem{theorem}{Theorem}
\newtheorem{definition}{Definition}
\newtheorem{lemma}{Lemma}
\newcommand{\ket}[1]{\left | #1 \right\rangle}
\newcommand{\bra}[1]{\left \langle #1 \right |}
\newcommand{\braket}[2]{\left\langle #1|#2\right\rangle}
\renewcommand{\epsilon}{\varepsilon}
\DeclareFontFamily{U}{bbold}{}
\DeclareFontShape{U}{bbold}{m}{n}
 {
  <-5.5> s*[1.069] bbold5
  <5.5-6.5> s*[1.069] bbold6
  <6.5-7.5> s*[1.069] bbold7
  <7.5-8.5> s*[1.069] bbold8
  <8.5-9.5> s*[1.069] bbold9
  <9.5-11> s*[1.069] bbold10
  <11-15> s*[1.069] bbold12
  <15-> s*[1.069] bbold17
 }{}
\DeclareRobustCommand{\identity}{%
  \text{\usefont{U}{bbold}{m}{n}1}%
}
\begin{document}

\title{The Perfect State Transfer Graph Limbo}
\date{\today}

\author{Alastair Kay}
\affiliation{Royal Holloway University of London, Egham, Surrey, TW20 0EX, UK\\\href{mailto:alastair.kay@rhul.ac.uk}{alastair.kay@rhul.ac.uk}}
\orcid{0000-0002-0311-6266}
%
%
\begin{abstract}
Perfect state transfer between qubits on a uniformly coupled network, with interactions specified by a graph, has advantages over an engineered chain, such as much faster transfer times (independent of the distance between the input and output vertices). This is achieved by many couplings working in parallel. The trade-offs seem to be the need for increasing connectivity between qubits, and a large number of vertices in the graph. The size of existing graph constructions scale exponentially in the transfer distance, making these schemes impractical over anything but the shortest distances. This prompts the question of ``How low can you go?'' for the size of the graph achieving a particular transfer distance. In this paper, we present reductions in the vertex count required, although the overall scaling with transfer distance remains exponential. We also tighten existing bounds on the required degree of the vertices of the graph.
\end{abstract}

The task of perfect state transfer \cite{bose2003,christandl2004,kay2010a}, wherein an unknown quantum state $\ket{\psi}$ is transported between two distant locations within a quantum computer by a fixed, time invariant Hamiltonian evolution, was originally intended as a plausible alternative for achieving that task without the vast control overheads of other implementations, such as those derived from the quantum circuit model. A huge range of different assumptions and paradigms have been tried. Two of the major classes of solutions are engineered chains \cite{christandl2004,kay2010a} and uniform networks \cite{christandl2005,kay2011a,godsil2010}. Engineered chains achieve arbitrary transfer distances with a linearly scaling number of qubits, and a linearly scaling transfer time, with the challenge of having to fix the coupling strength between each neighbouring spin to a particular value. There are also some beautiful symmetries (based on the Jordan-Wigner transformation \cite{jordan1928,nielsen2005}) that facilitate a whole host of interesting results, such as \cite{kay2010a,kay2017c,kay2017d}. Uniform networks, meanwhile, still achieve an arbitrary transfer distance, with the advantage of a fixed transfer time but at the cost of requiring more spins. Existing solutions require exponentially many qubits. Can these uniformly coupled networks genuinely be considered as offering a plausible option for state transfer? In this paper, we suggest that the answer is negative for arbitrary transfer distances. By imposing that the coupling network of spins is embeddable in a two- or three-dimensional geometry, the degree of the connectivity is limited. We improve consequential limits on the transfer distance \cite{godsil2010}. We then consider the number of qubits required to achieve transfer over a given distance. While we substantially reduce the exponent compared to existing families of transfer graphs, we are unable to address the question of whether a sub-exponential scaling is possible.

\section{Summary}

Designs for quantum computers are often limited by locality constraints; qubits can only couple to those which are physically close to them. Meanwhile, quantum algorithms assume the ability to couple arbitrary pairs of qubits. To the theorist, there's a simple solution; apply \textsc{swap} gates between neighbouring qubits until the two, initially distantly separated, qubits are moved next to each other so that we can apply a gate between them, and then they can be returned to their initial positions. In practice, however, this is a complicated sequence to coordinate.

The task of perfect state transfer seeks to simplify the task, negating the need to switch on and off many different swap operations in a delicately coordinated fashion, and instead have a single, time invariant operation that will do the job of bringing the two qubits together. It turns out that there are speed advantages in doing this, which yield a corresponding reduction in noise.

In this form, perfect state transfer provides an interesting case study for the broader, and extremely pressing issue of the compilation of quantum circuits -- how should we compile quantum circuits, or at least common, repeated parts of algorithms, so that they run as quickly as possible, with as simple an implementation as possible, on any given quantum hardware? This will maximise what can be achieved on current and near-future quantum computers and quantum technologies. By understanding the whole range of techniques available in one special case, it is hoped that these will translate to a wider range of tasks. There are already some hints in this direction \cite{kay2017c}.

Studies of perfect state transfer have diverged along two main directions. The first is probably more physically relevant, wherein we construct a one-dimensional path between the two qubits, and pass one directly to the other. The time required to achieve the transfer scales with the length of the path, but only requires half the time that the corresponding sequence of \textsc{swap}s would require. Moreover, the other qubits in the chain do not have to be initialised in a fixed initial state. The second, using the mathematical structure of graphs, has many mathematically beautiful features, and a huge speed advantage (requiring a time that is independent of the distance between qubits). However, this speed is generated by using many inter-qubit couplings in parallel. The question is whether these graphs for perfect state transfer ever stand a chance of yielding a physically plausible approach?

In this paper, we use two metrics. The first is the connectivity of each qubit. If a qubit has to couple simultaneously to many other qubits, the scheme is impractical because the whole point is that these schemes are supposed to be working in a system where each qubit has a very limited number of other qubits that it can interact with. The second is the number of qubits required. If we need a large number of qubits, all prepared in the ideal initial state (so we can't even use them in other parts of the computer), this is probably a waste of our resources.

Our results show the impracticality of state transfer graphs on the basis of qubit connectivity. Regarding the size of the graph, we develop new families of graph which significantly reduce the qubit count, but we still suffer from the same exponentially large growth of previous constructions. We have been unable to prove that there are no polynomial sized graphs that achieve perfect transfer. However, we have been able to show that existing lower bounds are inadequate.

\section{Introduction}

Qubits placed at the vertices $V$ of a graph $G$ are assumed to interact via a Hamiltonian that is specified by the edges $E$ of the graph:
$$
H=\frac12\sum_{(n,m)\in E}(X_nX_m+Y_nY_m),
$$
where $X_n$ is the Pauli-$X$ matrix applied to the qubit on vertex $n$. This Hamiltonian obeys the commutation relation
$$
\left[H,\sum_{n\in V}Z_n\right]=0,
$$
demonstrating that it decomposes into subspaces based on the eigenvalues of the total spin operator $\sum_nZ_n$ -- the basis elements $\ket{x}$ for $x\in\{0,1\}^N$ are partitioned by the Hamming weight of $x$. In particular, the one-excitation subspace of $H$ is just represented by the adjacency matrix $A$ of the graph. While other Hamiltonian models have been considered (such as the Heisenberg model), which make connections to other graph descriptions such as the Laplacian or sign-less Laplacian, these are not considered here; many of the techniques cannot be applied.

In the task of perfect state transfer between a pair of qubits $a$ and $b$, an unknown state $\ket{\psi}$ starts on qubit $a$, with all other qubits in the state $\ket{0}$. After evolution by some fixed time $t_0$ under $H$, the state $\ket{\psi}$ is required to arrive on qubit $b$ (up to some known relative phase gate that can be corrected for later). Since $\ket{0}^{\otimes N}$ is an eigenstate of $H$, it suffices to show that
$$
|\bra{a}e^{-iAt_0}\ket{b}|=1
$$
where $\ket{a}$ conveys the presence of a single excitation on qubit $a$, and $0$ everywhere else.
In this context, perfect transfer is achieved not through the engineering of coupling strengths, as is usually the case for one-dimensional chains \cite{bose2003,christandl2004,christandl2005,kay2010a,karbach2005}, but by judicious choice of the graph itself. In many ways, this is far more challenging because there are only discrete choices rather than a continuum of options.

Our aim is to understand when, or even if, perfect transfer on uniform graphs is physically interesting, i.e.\ whether there are realistic experimental prospects. Our main criterion for this is to transfer a state over as large a distance as possible, as simply as possible. For example, given that the path $P_2$ can give a perfect transfer over a distance 1, we see any other graph giving transfer distance 1 (such as integral circulant graphs \cite{angeles-canul2010,basic2013}) as more complicated; they necessarily involve more vertices and a higher degree on at least some vertices, neither of which is desirable. It is not as if such complication can be mitigated by other features such as routing -- after twice the perfect transfer time, there is always a perfect revival on the input site, so the state cannot be passed around multiple different sites, giving different users access to the transferring state.

In this paper, we provide two results. The first is a tightening of a known relationship between the maximum degree of the graph and its transfer distance. This suggests that any realistic graph implementation (determined by local couplings in a two- or three-dimensional structure) will only be able to transfer over very short distances. The second considers the scaling of the number of vertices with the transfer distance. We are able to significantly reduce the best-known exponent on the scaling of the number of vertices, but that relation nevertheless remains an exponential trade-off between transfer distance and vertex number.

\section{The Degree-Distance Bound}

We shall denote the maximum transfer distance of a graph by $D$, its maximum degree by $d$, and its vertex count by $N$. Our aim is, for a particular $D$, to minimise the parameters $d$ and $N$, and see how they might be traded off between each other. The conditions for perfect state transfer are well characterised \cite{godsil2010,kay2011a}, starting from the requirement that perfect state transfer graphs must also exhibit perfect revivals $\ket{a}\xrightarrow{2t_0}\ket{a}$. Indeed many properties have been calculated only based on the existence of perfect revival, and not the further imposition of perfect transfer.

\begin{lemma}
Let $\Phi_a$ be the set of eigenvalues of $A$ for which the eigenvector has support on $a$. Vertex $a$ exhibits perfect revivals iff its eigenvalues $\lambda_n\in\Phi_a$ take the form
\begin{equation}
\lambda_n=\frac12\left(\alpha+\beta_n\sqrt{\Delta}\right), \label{eqn:evalues}
\end{equation}
where $\alpha$, $\{\beta_n\}$ and $\Delta$ are integers. The values $\left\{\frac{\beta_n-\beta_m}{2}\right\}$ are coprime integers $\left(\text{gcd}\left(\left\{ \frac{\beta_n-\beta_m}{2}\right\}\right)=1\right)$. 
\end{lemma}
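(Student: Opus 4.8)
The plan is to work entirely in the one-excitation (adjacency) picture, where a revival is the statement $|\bra{a}e^{-iAt_0}\ket{a}|=1$, and to reduce this to an arithmetic condition on the spectrum that I then analyse with a little algebraic number theory. First I would diagonalise, writing $e^{-iAt}=\sum_n e^{-i\lambda_n t}P_n$ over the distinct eigenvalues with spectral idempotents $P_n$, and set $p_n=\bra{a}P_n\ket{a}=\|P_n\ket{a}\|^2\ge 0$, so that $\sum_n p_n=1$ and $p_n>0$ precisely on $\Phi_a$. Then $\bra{a}e^{-2it_0A}\ket{a}=\sum_n p_n e^{-2it_0\lambda_n}$ is a convex combination of unit-modulus numbers, so by the equality case of the triangle inequality its modulus is $1$ iff all phases $e^{-2it_0\lambda_n}$ with $\lambda_n\in\Phi_a$ coincide. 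This is equivalent to $t_0(\lambda_n-\lambda_m)\in\pi\mathbb{Z}$ for every pair in $\Phi_a$, i.e.\ all pairwise differences are integer multiples of a single real $g$; conversely such commensurability yields a revival at $t_0=\pi/g$. The lemma is thereby reduced to: the differences on $\Phi_a$ are commensurable $\iff$ the stated form holds.

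For the forward direction I would exploit that $A$ has integer entries, so its eigenvalues are algebraic integers and each $P_n$ is a polynomial in $A$. The crucial preliminary step is that $\Phi_a$ is stable under the Galois group of the splitting field: for an automorphism $\sigma$, the identity $\sigma(A)=A$ gives $\sigma(P_n)=P_{\sigma(n)}$ for the induced permutation of eigenvalues, hence $\sigma(p_n)=p_{\sigma(n)}$; since $\sigma$ maps nonzero to nonzero and the $p_n$ are nonnegative reals, $\lambda_n\in\Phi_a\iff\lambda_{\sigma(n)}\in\Phi_a$. Now let $M$ be the $\mathbb{Z}$-module generated by the differences $\{\lambda_n-\lambda_m\}$ over $\Phi_a$; by commensurability $M$ has rank one, so $M=g\mathbb{Z}$ with $g$ a real algebraic integer (it lies in $M$, being an integer combination of the differences by B\'ezout), and the integers $c_n$ defined by $\lambda_n-\lambda_1=c_n g$ have the $\gcd$ of their pairwise differences equal to $1$. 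Each $\sigma$ permutes $\Phi_a$ and therefore preserves $M$, so $\sigma(g)$ is again a generator, forcing $\sigma(g)=\pm g$; hence $g^2$ is fixed by the whole Galois group and, being an algebraic integer, $\Delta:=g^2\in\mathbb{Z}$ with $g=\sqrt{\Delta}$.

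Galois-stability also guarantees that the conjugate $\overline{\lambda_n}$ (the image of $\lambda_n$ under $\sqrt{\Delta}\mapsto-\sqrt{\Delta}$) again lies in $\Phi_a$, which lets me set $\alpha:=\lambda_n+\overline{\lambda_n}$ and $\beta_n:=(\lambda_n-\overline{\lambda_n})/\sqrt{\Delta}$. Writing $\overline{\lambda_n}=\lambda_1+c_n'g$ shows $\beta_n=c_n-c_n'\in\mathbb{Z}$, and a short check that $c_n+c_n'$ is independent of $n$ makes $\alpha$ independent of $n$; being a Galois-invariant algebraic integer, $\alpha\in\mathbb{Z}$. Then $2\lambda_n=\alpha+\beta_n\sqrt{\Delta}$ is exactly \eqref{eqn:evalues}, and $\tfrac{\beta_n-\beta_m}{2}=c_n-c_m$ inherits $\gcd=1$ from $g$ being the generator of $M$; the easy case $\sqrt{\Delta}\in\mathbb{Z}$ (all eigenvalues integral, conjugation trivial) is handled by taking $\Delta=g^2$ directly, and the converse direction is immediate from the first paragraph. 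I expect the main obstacle to be the algebraic-integer bookkeeping of the middle step, specifically establishing that $\Phi_a$ is Galois-stable and that the module generator satisfies $g^2\in\mathbb{Z}$, since the integrality of $\alpha$ and $\beta_n$ and the coprimality normalisation all follow cleanly once these two facts are secured.
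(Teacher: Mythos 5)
First, a point of orientation: the paper itself offers no proof of this lemma --- it cites Theorem 6.1 of \cite{godsil2010}, remarking only that square factors are absorbed into $\Delta$ (rather than taking $\Delta$ square-free, as Godsil does) so as to enforce the gcd condition, following \cite{coutinho2018}. Measured against that cited argument, your proposal is essentially a correct reconstruction of it: the reduction of periodicity to commensurability of the eigenvalue differences on $\Phi_a$ via the equality case of the triangle inequality, and the Galois-stability of $\Phi_a$ through $\sigma(P_n)=P_{\sigma(n)}$, are exactly the engine of Godsil's proof. Your one genuine organisational improvement is the normalisation: by taking $g$ to be the generator of the $\mathbb{Z}$-module $M$ of differences, the coprimality $\gcd\left\{\smallfrac[\beta_n-\beta_m]{2}\right\}=\gcd\{c_n-c_m\}=1$ is automatic, since it is just the statement that $g$ generates $M$; this is precisely the renormalisation the paper attributes to \cite{coutinho2018}, obtained for free rather than as an afterthought.

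Two steps need patching, and the same elementary trick fixes both. (i) When you introduce $\overline{\lambda_n}$ as ``the image of $\lambda_n$ under $\sqrt{\Delta}\mapsto-\sqrt{\Delta}$'', you are tacitly assuming $\lambda_n\in\mathbb{Q}(\sqrt{\Delta})$, which has not been established at that point --- an extension of that automorphism to the splitting field is neither unique nor canonical, so without this membership $\overline{\lambda_n}$ is ill-defined. The fix lives inside your own framework: for $\sigma$ in the subgroup $H$ of the Galois group satisfying $\sigma(g)=g$, stability of $\Phi_a$ gives $\sigma(\lambda_1)=\lambda_1+c_\sigma g$ with $c_\sigma\in\mathbb{Z}$, and $\sigma\tau(\lambda_1)=\lambda_1+(c_\sigma+c_\tau)g$ shows that $\sigma\mapsto c_\sigma$ is a homomorphism from the finite group $H$ into $(\mathbb{Z},+)$, hence trivial; so $\lambda_1$ lies in the fixed field of $H$, which is $\mathbb{Q}(g)=\mathbb{Q}(\sqrt{\Delta})$, and then so does every $\lambda_n=\lambda_1+c_n g$. (ii) In your ``easy case'' $g\in\mathbb{Q}$ you assert that all eigenvalues in $\Phi_a$ are integral; true, but it needs the same homomorphism argument applied to the full Galois group (triviality of $c_\sigma$ forces $\lambda_1\in\mathbb{Q}$, hence $\lambda_1\in\mathbb{Z}$ as an algebraic integer). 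With those two insertions the proof is complete, and the remaining claims all check out: $\alpha_n-\alpha_m=(\lambda_n-\lambda_m)+\overline{(\lambda_n-\lambda_m)}=0$ gives the $n$-independence of $\alpha$, and comparing the two expressions for $\lambda_n-\lambda_m$ yields $\smallfrac[\beta_n-\beta_m]{2}=c_n-c_m$, whence the stated parity and coprimality.
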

The most complete proof is in Theorem 6.1 of \cite{godsil2010}, although, there, $\Delta$ is taken to be square-free. We have instead incorporated additional factors inside $\Delta$ in order to fix the GCD condition \cite{coutinho2018}.

\begin{lemma}
In order to achieve perfect state transfer between qubits $a$ and $b$, a graph exhibiting perfect revivals must satisfy
$$
\braket{\lambda_n}{a}=\pm\braket{\lambda_n}{b}
$$
for all $\lambda_n\in\Phi_a$, and the parity of the integer $\frac{\beta_n-\beta_m}{2}$ must match the sign $\frac{\braket{\lambda_m}{a}\braket{\lambda_n}{b}}{\braket{\lambda_n}{a}\braket{\lambda_m}{b}}$. Perfect transfer occurs at all times that are odd multiples of $\pi/\sqrt{\Delta}$, with perfect revivals on the input at all even multiples. 
\end{lemma}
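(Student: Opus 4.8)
The plan is to work in the eigenbasis of $A$ and extract equality conditions from the standard bound $|\bra{a}e^{-iAt}\ket{b}| \le 1$. First I would expand the transfer amplitude as $\bra{a}e^{-iAt}\ket{b} = \sum_n e^{-i\lambda_n t}\braket{a}{\lambda_n}\braket{\lambda_n}{b}$, where the sum effectively runs over $\lambda_n \in \Phi_a$; only eigenvectors with support on $a$ contribute, and within any degenerate eigenspace I would rotate the basis so that a single vector carries all of the overlap with $\ket{a}$. Since $A$ is real symmetric, the eigenvectors may be taken real. Applying the triangle inequality and then Cauchy--Schwarz gives $|\bra{a}e^{-iAt}\ket{b}| \le \sum_n|\braket{a}{\lambda_n}||\braket{\lambda_n}{b}| \le 1$, using $\sum_n|\braket{a}{\lambda_n}|^2 = \sum_n|\braket{\lambda_n}{b}|^2 = 1$.

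Next I would read off the equality conditions. Saturating Cauchy--Schwarz forces $|\braket{\lambda_n}{a}| = |\braket{\lambda_n}{b}|$ for every $n$, which for real eigenvectors is exactly $\braket{\lambda_n}{a} = \pm\braket{\lambda_n}{b}$; writing the resulting sign as $\sigma_n := \braket{\lambda_n}{b}/\braket{\lambda_n}{a} \in \{\pm1\}$ gives $\braket{a}{\lambda_n}\braket{\lambda_n}{b} = \sigma_n|\braket{a}{\lambda_n}|^2$. This establishes the first stated condition.

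Then I would substitute the eigenvalue form $\lambda_n = \frac12(\alpha + \beta_n\sqrt{\Delta})$ from the previous lemma and factor out a reference phase. Choosing a fixed index $0$ as origin and writing $c_n := (\beta_n-\beta_0)/2 \in \mathbb{Z}$, the amplitude becomes $e^{-i\lambda_0 t}\sum_n e^{-ic_n\sqrt{\Delta}t}\sigma_n|\braket{a}{\lambda_n}|^2$. Saturating the triangle inequality now requires every term $e^{-ic_n\sqrt{\Delta}t}\sigma_n$ to share a common phase; since the $\sigma_n$ are real, each factor $e^{-ic_n\sqrt{\Delta}t}$ must itself be $\pm1$, i.e. $c_n\sqrt{\Delta}t \in \pi\mathbb{Z}$ for all $n$. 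This is where the coprimality hypothesis of the previous lemma does the real work: because $\gcd\{(\beta_n-\beta_m)/2\} = 1$ and $c_0 = 0$, the integers $\{c_n\}$ have gcd $1$, so "$c_n\sqrt{\Delta}t/\pi \in \mathbb{Z}$ for all $n$" forces $\sqrt{\Delta}t/\pi$ itself to be an integer. Hence transfer can only occur at $t = k\pi/\sqrt{\Delta}$.

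Finally I would evaluate at $t = k\pi/\sqrt{\Delta}$, where $e^{-ic_n\sqrt{\Delta}t} = (-1)^{c_n k}$. At even $k$ every relative phase is trivial, so $\bra{a}e^{-iAt}\ket{a} = e^{-i\lambda_0 t}$ has unit modulus automatically -- the perfect revival on the input. At odd $k$ the common-phase requirement reduces to $(-1)^{c_n-c_m}\sigma_m = \sigma_n$, i.e. $(-1)^{(\beta_n-\beta_m)/2} = \sigma_n\sigma_m = \frac{\braket{\lambda_m}{a}\braket{\lambda_n}{b}}{\braket{\lambda_n}{a}\braket{\lambda_m}{b}}$, which is the second stated condition and pins transfer to $b$ to the odd multiples of $\pi/\sqrt{\Delta}$. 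The main obstacle I anticipate is the number-theoretic step tying coprimality to the quantisation of the transfer time -- upgrading "each $c_n\sqrt{\Delta}t/\pi$ is an integer" to "$\sqrt{\Delta}t/\pi$ is an integer" using $\gcd\{c_n\} = 1$ -- together with the bookkeeping needed to reduce degenerate eigenspaces so that the clean single-index-per-eigenvalue expansion is justified.
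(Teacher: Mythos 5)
Your proof is correct and is essentially the argument the paper relies on: the paper states this lemma without proof, deferring to the standard characterisations in \cite{godsil2010,kay2011a}, and those derivations proceed exactly as you do --- eigenbasis expansion of $\bra{a}e^{-iAt}\ket{b}$, saturation of the triangle and Cauchy--Schwarz inequalities to force $\braket{\lambda_n}{a}=\pm\braket{\lambda_n}{b}$, the B\'ezout/gcd step upgrading ``$c_n\sqrt{\Delta}t/\pi\in\mathbb{Z}$ for all $n$'' to ``$\sqrt{\Delta}t/\pi\in\mathbb{Z}$'', and reading off the parity--sign matching at odd multiples and the automatic revival at even multiples. The degeneracy bookkeeping you flag as an obstacle is in fact handled by the same saturation you already invoke: equality in Cauchy--Schwarz forces the eigenprojector images $E_\lambda\ket{a}$ and $E_\lambda\ket{b}$ (real vectors of equal norm) to be parallel, i.e.\ $E_\lambda\ket{b}=\pm E_\lambda\ket{a}$ within each eigenspace, which justifies your single-vector-per-eigenvalue reduction.
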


The maximum degree of a graph is a particularly limiting parameter for physical realisation. If we imagine a physical realisation in which only local spins can interact directly, then embedding in a two or three-dimensional geometry places severe constraints on the number of vertices that can interact; on the order of 4 in two-dimensions or 6 in three-dimensions. This in turn constrains the distance of transfer:
\begin{lemma}\label{lem:degdist}
For a given $\Delta$, $D\leq 2d/\sqrt{\Delta}.$
\end{lemma}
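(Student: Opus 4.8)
The plan is to connect the transfer distance $D$ to the number of distinct eigenvalues in $\Phi_a$, and then to turn that spectral multiplicity into a lower bound on the spread of the $\beta_n$, which in turn is capped by the degree bound on the spectrum of $A$. First I would fix a pair of vertices $a,b$ achieving the maximum transfer distance, so that their graph distance is $D$, and choose a geodesic $a=w_0,w_1,\dots,w_D=b$ with each $w_j$ at distance exactly $j$ from $a$. Since $(A^k)_{a w_j}$ counts walks of length $k$ from $a$ to $w_j$, it vanishes for $k<j$ and is strictly positive for $k=j$. Hence the $(D+1)\times(D+1)$ array with entries $(A^k)_{a w_j}$, for $j,k\in\{0,\dots,D\}$, is triangular with a nonzero diagonal, so the Krylov vectors $\ket{a},A\ket{a},\dots,A^D\ket{a}$ are linearly independent.

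The dimension of this cyclic subspace equals the number of distinct eigenvalues of $A$ that have nonzero overlap with $\ket{a}$, which is precisely $|\Phi_a|$. The independence above therefore certifies $|\Phi_a|\geq D+1$, giving at least $D+1$ distinct values $\beta_n$ in the parametrisation of Eq.~\eqref{eqn:evalues}. Next I would invoke the integrality supplied by the first Lemma: since each $\tfrac{\beta_n-\beta_m}{2}$ is an integer, all the $\beta_n$ share a common parity, so any $D+1$ distinct same-parity integers must span a range $\beta_{\max}-\beta_{\min}\geq 2D$. Consequently the eigenvalues in $\Phi_a$ have spread $\lambda_{\max}-\lambda_{\min}=\tfrac12(\beta_{\max}-\beta_{\min})\sqrt{\Delta}\geq D\sqrt{\Delta}$. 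A Gershgorin (or Perron--Frobenius) bound then places every eigenvalue of $A$ in $[-d,d]$, so this spread is at most $2d$. Combining the two inequalities yields $D\sqrt{\Delta}\leq 2d$, which is the claim.

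The walk-counting and the spectral-radius bound are routine; the step that does the genuine work is the linear-independence argument establishing $|\Phi_a|\geq D+1$. The subtlety I would watch most carefully is the constant factor: without the same-parity observation one only obtains $D\leq 4d/\sqrt{\Delta}$, and it is exactly the integrality of $\tfrac{\beta_n-\beta_m}{2}$ from the first Lemma that forces consecutive admissible $\beta_n$ to differ by at least $2$, sharpening the bound to $2d/\sqrt{\Delta}$.
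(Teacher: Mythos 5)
Your proposal is correct and takes essentially the same route as the paper: your Krylov-vector independence along a geodesic is precisely the argument the paper cites from Coutinho to get $D\leq\epsilon_a\leq|\Phi_a|-1$, your same-parity observation on the $\beta_n$ is the paper's statement that distinct eigenvalues in $\Phi_a$ are separated by at least $\sqrt{\Delta}$, and both proofs finish with the identical Gershgorin bound placing the spectrum in $[-d,d]$. The only cosmetic differences are that you bound the spectral spread directly ($\geq D\sqrt{\Delta}$, $\leq 2d$) rather than counting how many $\sqrt{\Delta}$-separated eigenvalues fit in the interval, and that you spell out the linear-independence step the paper only cites.
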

\begin{proof}
This originates from \cite{godsil2010}. To summarise in our slightly different notation, apply Gershgorin's circle theorem to the adjacency matrix $A$: each diagonal element is 0, and the off-diagonals total to no more than $d$, so all eigenvalues are bounded between $\pm d$. Since every distinct eigenvalue must be separated by at least $\sqrt{\Delta}$, there are no more than $2d/\sqrt{\Delta}+1$ distinct eigenvalues, and $|\Phi_a|\leq 2d/\sqrt{\Delta}+1$. \cite{coutinho2018} makes the argument that if the eccentricity of $a$, the maximum distance of any vertex from $a$, is $\epsilon_a$, then $\{A^k\ket{a}\}_{k=0}^{\epsilon_a}$ must form a basis, which can be no larger than $|\Phi_a|$. Thus, $D\leq \epsilon_a\leq|\Phi_a|-1\leq 2d/\sqrt{\Delta}$.
\end{proof}

\begin{figure}
\begin{center}
\includegraphics[width=0.45\textwidth]{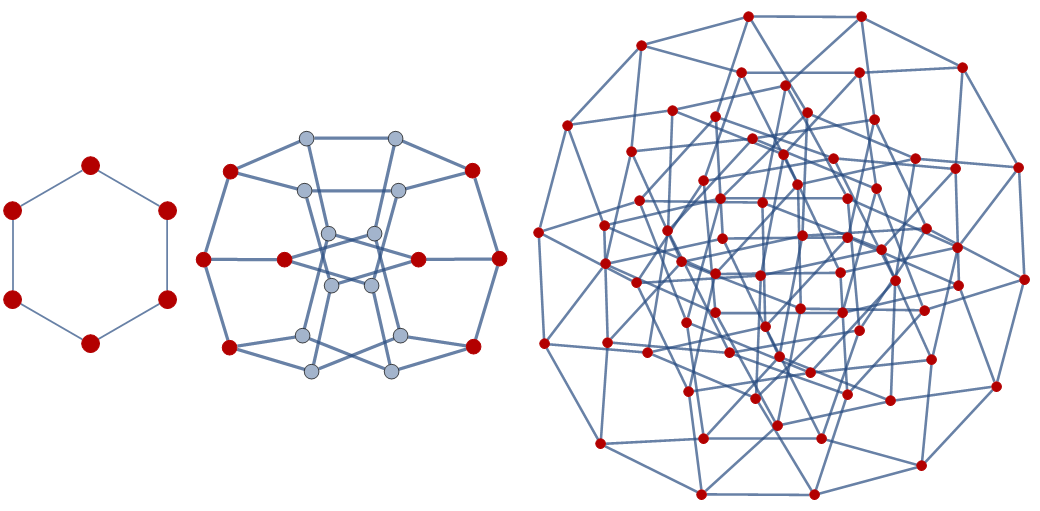}
\end{center}
\caption{Regular graphs of degree $d=2,3,4$ where vertices in red have eccentricities $2d-1$ and are periodic.}\label{fig:examples}
\end{figure}

The above proof applies to all graphs with perfect revivals, and there are no known perfect transfer graphs for which $D>d$, while there are examples of graphs with perfect revivals that exceed that bound. Some examples are depicted in Fig.\ \ref{fig:examples}. Our first task is to tighten this bound by specifically imposing the properties of perfect transfer graphs instead of just perfect revival graphs. However, we shall do this in a slightly restricted setting, wherein we assume that the vertex $a$ that we transfer from is {\em spectrally extremal}, i.e.\ the number of eigenvectors with non-trivial support on $a$ is $|\Phi_a|=\epsilon_a+1=D+1$. This would seem to be a reasonable assumption. To make progress in excluding certain values of $\Delta$, we need an extra result first:
\begin{lemma} \label{lem:helper}
For an ordered set of integers, $\Lambda$, of size  $|\Lambda|=N>3$ such that the $n^{th}$ value has parity $(-1)^{n+1}$, $R$ is rational, and the denominator of $R$ (in its simplest possible form) contains a factor of 2, where
$$
R=\frac{1}{\displaystyle\sum_{\lambda\in\Lambda}\frac{(-1)^{\lambda}}{\prod_{\mu\in\Lambda\setminus\lambda}(\lambda-\mu)}}.
$$
\end{lemma}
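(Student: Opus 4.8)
The plan is to read $1/R$ as a divided difference and to expose the factor of two through the binomial expansion of $(-1)^x$. With $\Lambda=\{\lambda_1,\dots,\lambda_N\}$ and $P(x)=\prod_{n}(x-\lambda_n)$, the quantity in the denominator of $R$ is
\[
\frac1R=\sum_{n}\frac{(-1)^{\lambda_n}}{\prod_{m\neq n}(\lambda_n-\lambda_m)}=\sum_n\frac{(-1)^{\lambda_n}}{P'(\lambda_n)},
\]
which is exactly the $(N-1)$-th divided difference of $f(x)=(-1)^x$ at the nodes $\lambda_n$. It is a finite sum of rationals, so $R$ is rational as soon as this sum is nonzero; I treat non-vanishing at the end.

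For the $2$-adic content I would expand $f$ in the binomial basis,
\[
(-1)^x=\bigl(1+(-2)\bigr)^x=\sum_{k\ge0}(-2)^k\binom{x}{k},
\]
a finite, exact identity at each integer node (after translating $\Lambda$ by an even integer we may assume every $\lambda_n\ge0$, which changes neither the parities nor the value of $1/R$). Divided differences are linear and the $(N-1)$-th one annihilates every polynomial of degree below $N-1$, so only $k\ge N-1$ survive:
\[
\frac1R=\sum_{k\ge N-1}(-2)^k\,\binom{x}{k}[\lambda_1,\dots,\lambda_N].
\]
The crucial remark is that $k!\binom{x}{k}=x(x-1)\cdots(x-k+1)$ has integer coefficients, so its divided difference is an integer $M_k$ --- a $\mathbb{Z}$-linear combination of complete homogeneous symmetric functions of the integer nodes. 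Thus the $k$-th term equals $(-2)^kM_k/k!$, whose $2$-adic valuation is at least $k-v_2(k!)=s_2(k)\ge1$, with $s_2$ the binary digit sum (Legendre's formula). Summing gives $v_2(1/R)\ge1$; that is, $1/R$ is twice an odd-denominator rational, so the reduced denominator of $R$ indeed contains a factor of $2$.

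It remains to show the sum is nonzero, so that $R$ is genuinely rational. Here I would use that $\Lambda$ is ordered and that $(-1)^{\lambda_n}=(-1)^{n+1}$ strictly alternates: the interpolating polynomial of degree $\le N-1$ through these data therefore changes sign across each of the $N-1$ consecutive gaps, acquires $N-1$ distinct real roots, and so has degree exactly $N-1$; its leading coefficient is precisely $1/R$, which is consequently nonzero.

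I expect the main obstacle to be a tempting dead end rather than a deep difficulty. Attacking $\sum_n (-1)^{\lambda_n}/P'(\lambda_n)$ directly fails, because the same-parity factors of $P'(\lambda_n)$ make individual summands arbitrarily $2$-adically large, and the required integrality survives only after heavy cancellation. The binomial reorganisation is what dissolves this: it trades the cancellation for the transparent bookkeeping $v_2((-2)^k)=k$ against $v_2(k!)<k$. The one genuinely load-bearing step is the integrality of $M_k$, which controls the higher terms $k\ge N$ that are nonzero precisely when the nodes are not consecutive; notably this step is insensitive to the parities, so the alternating hypothesis does real work only in the non-vanishing argument.
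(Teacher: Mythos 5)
Your proposal is correct, and it reaches the conclusion by a genuinely different route, even though both arguments terminate in the same shape: a finite sum of terms $2^x\cdot(\text{integer})/x!$ with $x\geq N-1$, killed off by the same valuation fact $v_2(x!)\leq x-1$ (equivalently $v_2\bigl(2^k/k!\bigr)=s_2(k)\geq 1$, as in your Legendre step). The paper works directly on the sum: it normalises $\min\Lambda=0$, observes that the complement $\bar\Lambda\subset[0,\lambda_{\max}]$ consists of \emph{consecutive pairs} precisely because of the alternating-parity hypothesis (explicitly warning that otherwise extra signs ``would invalidate the rest of the proof''), extends the sum over all integers in the range, expands the hole polynomial $\prod_{\mu\in\bar\Lambda}(\mu-n)$ in the falling-factorial basis with integer coefficients $f^{(q)}$, and collapses via $\sum_n\binom{M}{n}=2^M$ to arrive at Eq.~(\ref{eqn:R}). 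You instead identify $1/R$ as the divided difference of $f(x)=(-1)^x=\sum_{k}(-2)^k\binom{x}{k}$, annihilate all $k<N-1$ by degree, and obtain integrality of $M_k=(k!\binom{x}{k})[\lambda_1,\dots,\lambda_N]$ from the complete homogeneous symmetric functions. Your route buys two real improvements: first, the $2$-adic bound becomes manifestly parity-free (your test claim checks out, e.g.\ $\Lambda=\{0,2,3,5\}$ gives $1/R=4/15$), whereas in the paper the parity condition is load-bearing \emph{inside} the valuation computation, so you have isolated exactly where the hypothesis matters; second, you prove non-vanishing of the sum (sign changes force the interpolant to have degree exactly $N-1$, so its leading coefficient $1/R$ is nonzero), a point the paper's proof leaves implicit --- in the application $1/R$ is a path count and nonzero for free, but as a standalone lemma your step is needed for $R$ to be well defined, and your argument incidentally works for all $N\geq 2$ rather than $N>3$. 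What the paper's version buys in exchange is elementary self-containedness: an explicit closed form with binomial bookkeeping and no interpolation or symmetric-function machinery.
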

\begin{proof}
As it is only the eigenvalue differences that are relevant, we take the smallest value in $\Lambda$ to be 0, and the largest to be $\lambda_{\max}$. Let $\bar\Lambda$ contain all the integers between 0 and $\lambda_{\max}$ that are not contained within $\Lambda$. It is vital to note that these occur in consecutive pairs to maintain the parity condition. We have that
$$
R=\frac{1}{\sum_{\lambda\in\Lambda}\frac{\prod_{\mu\in\bar\Lambda}(\mu-\lambda)}{\lambda!(\lambda_{\max}-\lambda)!}}.
$$
If it were the case that the members of $\bar\Lambda$ did not occur in consecutive pairs, there would be an additional sign on each term in the sum given by $(-1)^{|\{\mu\in\bar\Lambda:\mu<\lambda\}|}$, but we avoid this complication which would invalidate the rest of the proof. The sum can now be extended from $\Lambda$ to the entire set of integers 0 to $\lambda_{\max}$ since the additional terms from $\bar\Lambda$ are all multiplied by zero. Thus,
$$
R=\displaystyle\frac{1}{\sum_{n=0}^{\lambda_{\max}}\frac{\prod_{\mu\in\bar\Lambda}(\mu-n)}{n!(\lambda_{\max}-n)!}}.
$$
We rewrite this as
$$
R=\displaystyle\frac{\lambda_{\max}!}{\sum_{n=0}^{\lambda_{\max}}\binom{\lambda_{\max}}{n}\prod_{\mu\in\bar\Lambda}(\mu-n)}.
$$
Next, we express $\prod_{\mu\in\bar\Lambda}(\mu-n)$ in the form
$$
\sum_{q=0}^{|\bar\Lambda|}f^{(q)}(\bar\Lambda)\frac{n!}{(n-q)!}
$$
The precise form of $f^{(q)}(\bar\Lambda)$ is irrelevant, but it follows by induction (adding pairs of integers to $\bar\Lambda$ of $\mu$ and $\mu+1$) that these are integers. This lets us write
$$
R=\displaystyle\frac{\lambda_{\max}!}{\sum_{q=0}^{|\bar\Lambda|}f^{(q)}(\bar\Lambda)\frac{\lambda_{\max}!}{(\lambda_{\max}-q)!}\sum_{n=0}^{\lambda_{\max}-q}\binom{\lambda_{\max}-q}{n}}
$$
which reduces to
\begin{equation}\label{eqn:R}
R=\displaystyle\frac{1}{\sum_{q=0}^{|\bar\Lambda|}f^{(q)}(\bar\Lambda)\frac{2^{\lambda_{\max}-q}}{(\lambda_{\max}-q)!}}.
\end{equation}

Consider each term $2^x/x!$. How many factors of 2 does $x!$ contain? There are $\lfloor x/2\rfloor$ even numbers, $\lfloor x/4\rfloor$ that are divisible by 4 (each contributing an extra factor of 2), $\lfloor x/8\rfloor$ that are divisible by 8, and so on. Hence, the number of factors of 2 is no more than
$$
\frac{x}{2}+\frac{x}{4}+\ldots +1= \frac{x}{2}\sum_{n=0}^\infty 2^{-n}-\sum_{n=1}^\infty 2^{-n}=x-1,
$$
so there are no more than $x-1$ factors of 2, and not all of the terms in $2^{x}$ cancel. Hence, every term in the sum over $q$ in Equation (\ref{eqn:R}) has an even numerator. Thus, $R$ has a denominator containing a factor of 2.
\end{proof}

\begin{theorem}
For spectrally extremal perfect state transfer, $\Delta$ must be even, and if $\Delta\text{ mod }4=2$, the transfer distance $D$ must be even.
\end{theorem}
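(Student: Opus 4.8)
The plan is to compress the perfect-transfer data into one algebraic identity expressing $\Delta^{D/2}$ as an integer multiple of precisely the sum in Lemma~\ref{lem:helper}, and then to extract both parity claims from its $2$-adic valuation. Write $\Phi_a=\{\lambda_0,\dots,\lambda_D\}$; spectral extremality is exactly what makes $|\Phi_a|=D+1$. Put $w_n=|\braket{\lambda_n}{a}|^2>0$, and let $s_n\in\{\pm1\}$ satisfy $\braket{\lambda_n}{b}=s_n\braket{\lambda_n}{a}$, which exists by the perfect-transfer lemma. From Eq.~(\ref{eqn:evalues}) I introduce integers $c_n$ with $\lambda_n-\lambda_m=\tfrac12(\beta_n-\beta_m)\sqrt{\Delta}=(c_n-c_m)\sqrt{\Delta}$. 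The only external ingredient is graph-theoretic: since $b$ is at distance $D=\epsilon_a$ from $a$, the integers $T_k:=\bra{a}A^k\ket{b}$ vanish for $k<D$, while $I_{ab}:=T_D=\bra{a}A^D\ket{b}\ge1$ counts the geodesics from $a$ to $b$.

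The first real step is a Vandermonde inversion. Spectrally $T_k=\sum_n \lambda_n^k\,s_nw_n$ (terms off $\Phi_a$ drop out because $\braket{a}{\lambda_n}=0$ there), so the $D+1$ numbers $s_nw_n$ satisfy a square linear system with nodes $\lambda_n$ and right-hand side $(0,\dots,0,I_{ab})$. Since the $\lambda_n$ are distinct, the unique solution is the top Lagrange weight $s_nw_n=I_{ab}/Q'(\lambda_n)$ with $Q(z)=\prod_n(z-\lambda_n)$, using $\sum_n \lambda_n^k/Q'(\lambda_n)=\delta_{k,D}$ for $0\le k\le D$. Now $Q'(\lambda_n)=\prod_{m\neq n}(\lambda_n-\lambda_m)=\Delta^{D/2}\prod_{m\neq n}(c_n-c_m)$, and the sign rule of the perfect-transfer lemma gives $s_ns_m=(-1)^{c_n-c_m}$, hence $s_n=\sigma(-1)^{c_n}$ for a global sign $\sigma$. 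Multiplying by $s_n$ to recover $w_n$ and imposing $\sum_n w_n=1$ then collapses everything to the master identity
\[
\Delta^{D/2}=\sigma\,I_{ab}\sum_{n}\frac{(-1)^{c_n}}{\prod_{m\neq n}(c_n-c_m)}.
\]

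Both conclusions now follow from Lemma~\ref{lem:helper} with $\Lambda=\{c_n\}$. Its hypotheses hold: $|\Lambda|=D+1$, and the required alternating parity is automatic, because positivity of $w_n$ together with $s_nw_n=I_{ab}/Q'(\lambda_n)$ and $I_{ab}>0$ forces $\mathrm{sign}(s_n)=\mathrm{sign}\,Q'(\lambda_n)=(-1)^{D-n}$ for the sorted $c_n$, so $s_ns_{n+1}=-1$ and (by the sign rule) $c_{n+1}-c_n$ is odd. The bracketed sum is therefore $1/R$ with $R$ as in the lemma, giving $\Delta^{D/2}=\sigma I_{ab}/R$. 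As $R$ is manifestly rational, $\Delta^{D/2}\in\mathbb{Q}$; and Lemma~\ref{lem:helper} says the denominator of $R$ is even, i.e.\ $v_2(R)\le-1$. Then $\tfrac{D}{2}v_2(\Delta)=v_2(\Delta^{D/2})=v_2(I_{ab})-v_2(R)\ge1$, whence $v_2(\Delta)\ge1$ and $\Delta$ is even. For the refinement, a number with $\Delta\equiv2\pmod4$ is never a perfect square, so $\sqrt{\Delta}$ is irrational; as $\Delta^{D/2}=(\sqrt{\Delta})^D$ is nonetheless rational, $D$ must be even.

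The crux, and the step I expect to be most delicate, is the derivation of the master identity: it relies on spectral extremality to make the Vandermonde system square --- so that $T_0,\dots,T_D$ determine the weights with no freedom --- and on locking the global sign pattern $s_n=\sigma(-1)^{c_n}$; any slack in $|\Phi_a|$ would destroy the inversion. The remaining care is with cases outside the reach of Lemma~\ref{lem:helper}: $D\le 2$ (where $|\Lambda|\le 3$) and $\Delta$ a perfect square (where $\sqrt{\Delta}\in\mathbb{Z}$ and the irrationality argument is vacuous), which I would settle directly --- checking, for instance, that $P_2$ is forced to $\Delta=4$ once the $\gcd$ normalisation of Eq.~(\ref{eqn:evalues}) is imposed, in agreement with the theorem. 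Finally I must match the even-denominator statement to the valuation inequality in the correct direction, ensuring it is $R=1/S$, not $S$ itself, that carries the factor of $2$.
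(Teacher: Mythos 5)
Your proposal is correct and follows essentially the same route as the paper: the master identity $\Delta^{D/2}=\sigma I_{ab}/R$ is exactly the paper's integrality condition on the geodesic count $k$, obtained there by citing Coutinho's Theorem 11 (whose weight formula $s_nw_n=I_{ab}/Q'(\lambda_n)$ you re-derive via the Vandermonde/Lagrange inversion), and both arguments then conclude by feeding the alternating-parity set $\{c_n\}$ into Lemma~\ref{lem:helper} and forcing the factor of $2$ to come from $\Delta^{D/2}$. Your endgame is somewhat more explicit than the paper's (the $2$-adic valuation bookkeeping, the irrationality of $\sqrt{\Delta}$ when $\Delta\equiv 2\pmod 4$ forcing $D$ even, and the flagged $D\le 2$ edge cases excluded by the lemma's $N>3$ hypothesis), but these are elaborations of the same proof rather than a different one.
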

\begin{proof}
Due to Theorem 11 in \cite{coutinho2016a}, the eigenvector elements satisfy
$$
|\braket{a}{\lambda}|^2=\frac{k}{\left|\prod_{\mu\in\Phi_a\setminus\lambda}(\lambda-\mu)\right|}
$$
for all $\lambda\in\Phi_a$, where $k$ is the number of paths of length $\epsilon_a$ from $a$ to $b$ (deriving from the fact that $\bra{a}H^n\ket{b}=0$ if $n<\epsilon_a$ and $\bra{a}H^{\epsilon_a}\ket{b}=k$, and imposing strong cospectrality: $\braket{\lambda}{a}=\pm\braket{\lambda}{b}$). Moreover, the ordered values $\lambda-\mu$ must be alternately odd and even multiples of $\sqrt{\Delta}$. If we impose that $\sum_{\lambda\in\Phi_a}|\braket{a}{\lambda}|^2=1$, and take the standard form of the eigenvalues, Eq.\ (\ref{eqn:evalues}), then
$$
k=\frac{\Delta^{\frac{|\Phi_a|-1}{2}}}{\left|\sum_{n}\frac{(-1)^{\lambda}}{\prod_{m\neq n}\left(\frac{\beta_n-\beta_m}{2}\right)}\right|},
$$
which must be an integer. We know by Lemma \ref{lem:helper} that the right-hand side is of the form $\Delta^{\frac{|\Phi_a|-1}{2}}R$ where $R$ has a factor of 2 in the denominator (recall, it is the values $\frac{\beta_n-\beta_m}{2}$ that must be alternately odd and even integers). In order to cancel that factor of 2, it must be created by the $\Delta$ term. Either $\Delta$ is divisible by four, or it's divisible by two and the transfer distance is even.
\end{proof}

By eliminating the $\Delta=1$ case, we have tightened the degree-distance relation provided in Lemma \ref{lem:degdist}, putting a physically realistic option further from reach. The family of perfect state transfer graphs created from the hypercube of $P_2$ (of dimension $D$) saturate this bound: $\Delta=4$ and $D=d$. This family requires a number of vertices $N=2^D$. Better performance can be found for the family of graphs created from the hypercube of $P_3$, but only for even distances since $\Delta=2$: $D=d$ and $N=3^{D/2}$. We do not know if it is possible to saturate the $D\leq \sqrt{2}d$ bound when $\Delta=2$, or whether the true bound is indeed $D\leq d$ (we suspect the latter).

Note that in the following section, particular emphasis will be placed on the `standard' solutions for engineered perfect state transfer chains \cite{christandl2004}: for extremal transfer where $|\Phi_a|=D+1$ and $\Delta=4$,  all systems have a graph quotient corresponding to some engineered perfect state transfer chain, and this is the only chain that can saturate the $D=d$ limit.

\section{Vertex Number}\label{sec:vertex}

We now focus on minimising the total number of vertices in a graph of fixed transfer distance $D$. An important result here, originally due to \cite{christandl2005}, is that if two graphs $G_1$ and $G_2$ achieve perfect transfer in the same time (i.e.\ have the same value of $\Delta$) over distances $D_1$ and $D_2$, then their Cartesian product $G_1\square G_2$ (adjacency matrix $A_1\otimes\identity+\identity\otimes A_2$) achieves perfect transfer in that time, over a distance $D_1+D_2$. Hence, any graph $G$ that achieves perfect transfer over distance $D$ determines a family $G^{\square k}$ with transfer distance $kD$ (and a number of vertices $N^k$). Our measure of success must, therefore, be the exponent. We refer to this as the efficiency, $\eta$:
$$
\eta=\frac{1}{D}\log_2(N).
$$
As a point of reference, the $P_2$ hypercubes all have $\eta=1$, and the $P_3$ hypercubes have $\eta=\frac{1}{2}\log_2(3)\approx 0.792$. Smaller is better. Prior to this work, this was the best known efficiency. Our main tool in improving this efficiency is the concept of the partitioned graph and how, for a fixed partitioning, there is a manipulation rule that can trade number of vertices for number of edges without changing the ability for state transfer.

\subsection{The Partitioned Graph}

\begin{definition}
An equitable distance partition of a graph $G$ comprises distinct sets of vertices $\{V_i\}$ (called nodes) such that
\begin{itemize}
	\item Node $V_0$ is a single vertex; the input vertex.
	\item All vertices in $V_i$ are equidistant from $V_0$.
	\item Every vertex in node $V_i$ connects to the same number of vertices in $V_j$.
	\item No edges join vertices in the same node.  
\end{itemize}
\end{definition}

The `Partitioned Graph' is an equitable distance partition of the graph. In this, we partition vertices into equivalent sets, which we call a {\em node}. The following diagrammatic representation specifies a graph in which one set of $N_1$ vertices, each with degree $d_1$, connects to vertices in the other set, of which there are $N_2$, each with degree $d_2$:
\begin{center}
\begin{tikzpicture}
\draw [thick] (0,0) node[circle,style={fill=black,minimum width=1,text=white}]{$N_1$} -- (2,0) node[circle,style={fill=black,minimum width=1,text=white}]{$N_2$};
\draw (0.6,0.2) node{$d_1$};
\draw (1.4,0.2) node{$d_2$};
\end{tikzpicture}
\end{center}
A required consistency condition is that $N_1d_1=N_2d_2$. There is also a restriction that $d_1\leq N_2$ and $d_2\leq N_1$.

Partitioned graphs have a natural graph quotient, produced by working in the subspace comprising the uniform superposition across all vertices within an individual node. Each node represents a vertex in the quotient graph, and each edge has an effective coupling strength $\sqrt{d_1d_2}$. For perfect state transfer, we identify the input and output vertices as separate nodes (with occupancy 1). Perfect transfer between these two vertices in the graph quotient implies perfect transfer in the original graph.

\subsection{A Node-Preserving Manipulation Rule for Partitioned Graphs} \label{sec:manipulate}

There is a particularly simple manipulation rule for partitioned graphs, which preserves the partitioning. For any node with an occupancy $N_1$ that is divisible by 4, such that every neighbouring node has a corresponding degree that is even, we can perform the replacement depicted in Fig.\ \ref{fig:manipulate}. The consistency condition is still fulfilled, and the effective coupling in the quotient graph is unchanged. As such, the ability of the graph to perform state transfer between the extremal vertices is unchanged.
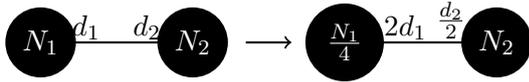
\begin{figure}[t]
\begin{center}
\begin{tikzpicture}
\draw [thick] (0,0) node[circle,style={fill=black,minimum width=1,text=white}]{$N_1$} -- (2,0) node[circle,style={fill=black,minimum width=1,text=white}]{$N_2$};
\draw (0.6,0.2) node{$d_1$};
\draw (1.4,0.2) node{$d_2$};

\draw [thick,->] (2.7,0) -- (3.3,0);

\draw [thick] (4,0) node[circle,style={fill=black,minimum width=1,text=white}]{$\frac{N_1}{4}$} -- (6,0) node[circle,style={fill=black,minimum width=1,text=white}]{$N_2$};
\draw (4.8,0.2) node{$2d_1$};
\draw (5.4,0.3) node{$\frac{d_2}{2}$};

\end{tikzpicture}
\end{center}
\caption{Basic manipulation rule for the partitioned graph, preserving the property of state transfer. Requires $N_1$ divisible by 4, $d_2$ even and $2d_1\leq N_2$.}\label{fig:manipulate}
\end{figure}

There are several variants of this manipulation rule. For example, one could reduce $N_1$ by any square factor $n^2$ so long as the degrees on the adjoining nodes are divisible by $n$. However, in practice we have found the $n=2$ is by far the most prevalent. Equally, while we usually think about the manipulation rule for the purposes of decreasing the vertex count, there are instances where increasing it is beneficial. For example,
\begin{align*}
\begin{tikzpicture}[baseline={([yshift=-.7ex]current bounding box.center)},scale=0.7, every node/.style={scale=0.7}]
\draw [thick] (0,0) node[circle,style={fill=black,minimum width=0.8cm,text=white}]{$1$} -- (2,0) node[circle,style={fill=black,minimum width=0.8cm,text=white}]{$16$};
\draw (0.6,0.2) node{$16$};
\draw (1.4,0.2) node{$1$};
\end{tikzpicture} & \rightarrow \begin{tikzpicture}[baseline={([yshift=-.7ex]current bounding box.center)},scale=0.7, every node/.style={scale=0.7}]
\draw [thick] (0,0) node[circle,style={fill=black,minimum width=0.8cm,text=white}]{$4$} -- (2,0) node[circle,style={fill=black,minimum width=0.8cm,text=white}]{$16$};
\draw (0.6,0.2) node{$8$};
\draw (1.4,0.2) node{$2$};
\end{tikzpicture}	\\
&\rightarrow\begin{tikzpicture}[baseline={([yshift=-.7ex]current bounding box.center)},scale=0.7, every node/.style={scale=0.7}]
\draw [thick] (0,0) node[circle,style={fill=black,minimum width=0.8cm,text=white}]{$4$} -- (2,0) node[circle,style={fill=black,minimum width=0.8cm,text=white}]{$4$};
\draw (0.6,0.2) node{$4$};
\draw (1.4,0.2) node{$4$};
\end{tikzpicture}
\end{align*}
We could not, initially, apply the rule on the second node because the degree on the first node was too high, but by first applying the manipulation rule in reverse on the first node, this problem was circumvented, and there was a net reduction in the vertex count from 17 to 8.

Lastly, this manipulation rule can be applied to subgraphs of nodes -- the occupancy of every node in the subgraph is reduced by the factor of 4, while all internal edges retain the same degrees, and edges joining the subgraph to the rest of the graph have degrees that change in the same way as for the basic rule. This allows us to significantly reduce the conditions under which the rule can be applied -- if the occupation of a node is divisible by 4, then  since $N_id_i$ is divisible by 4, each of the neighbouring nodes has an occupation that is divisible by 4 (in which case, we can expand the subgraph to include that node) or the degree of the connected node is divisible by 2, and hence the basic version of the rule is applicable.

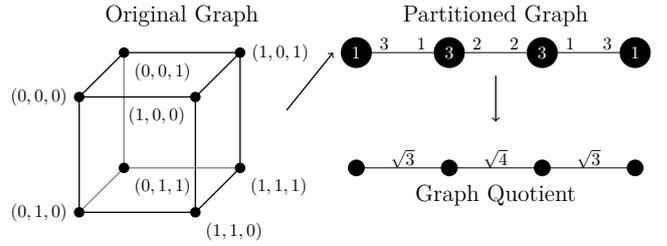
\begin{figure}
\begin{center}
\begin{adjustbox}{width=0.5\textwidth}
\begin{tikzpicture}
\draw (1.25,3.3) node{\Large{Original Graph}};
\draw (8,3.3) node{\Large{Partitioned Graph}};
\draw (8,-0.6) node{\Large{Graph Quotient}};
  \draw[thick](2.5,2.5,2.5)--(2.5,0,2.5);
  \draw[gray,thick](2.5,0,0)--(0,0,0)--(0,2.5,0);
  \draw[gray,thick](0,0,0)--(0,0,2.5);
  \draw[thick] (2.5,2.5,0) node[inner sep=2pt,circle,draw,fill,label={right:$(1,0,1)$}]{}  -- (0,2.5,0) node[inner sep=2pt,circle,draw,fill,label={south east:$(0,0,1)$}]{} -- (0,2.5,2.5) node[inner sep=2pt,circle,draw,fill,label={left:$(0,0,0)$}]{} -- (2.5,2.5,2.5) node[inner sep=2pt,circle,draw,fill,label={south west:$(1,0,0)$}]{}  -- (2.5,2.5,0) -- (2.5,0,0) node[inner sep=2pt,circle,draw,fill,label={south east:$(1,1,1)$}]{}  -- (2.5,0,2.5) node[inner sep=2pt,circle,draw,fill,label={south east:$(1,1,0)$}]{}  -- (0,0,2.5) node[inner sep=2pt,circle,draw,fill,label={left:$(0,1,0)$}]{} -- (0,2.5,2.5);
  \draw[thick] (0,0,0) node[inner sep=2pt,circle,draw,fill,label={south east:$(0,1,1)$}]{};

  \draw[thick,->] (3.5,1.25) -- (4.5,2.5);

\draw (5,2.5) node[circle,style={fill=black,minimum width=1,text=white}]{$1$} -- (7,2.5) node[circle,style={fill=black,minimum width=1,text=white}]{$3$} -- (9,2.5) node[circle,style={fill=black,minimum width=1,text=white}]{$3$} -- (11,2.5) node[circle,style={fill=black,minimum width=1,text=white}]{$1$};
\draw (5.6,2.7) node{$3$};
\draw (6.4,2.7) node{$1$};
\draw (8.4,2.7) node{$2$};
\draw (7.6,2.7) node{$2$};
\draw (10.4,2.7) node{$3$};
\draw (9.6,2.7) node{$1$};

\draw [thick,->] (8,2) -- (8,1);
\draw (5,0) node[circle,style={fill=black,minimum width=0.5,text=white}]{} -- (7,0) node[circle,style={fill=black,minimum width=0.5,text=white}]{} -- (9,0) node[circle,style={fill=black,minimum width=0.5,text=white}]{} -- (11,0) node[circle,style={fill=black,minimum width=0.5,text=white}]{};
\draw (6,0.2) node{$\sqrt{3}$};
\draw (8,0.2) node{$\sqrt{4}$};
\draw (10,0.2) node{$\sqrt{3}$};
\end{tikzpicture}
\end{adjustbox}
\end{center}
\caption{The three-fold hypercube of the two-vertex path $P_2$. The graph partition is simply based on the distance of each vertex from the input vertex. This yields a graph quotient which is the same as the standard perfect state transfer solution.}\label{fig:P2D3}
\end{figure}

For example, consider the $D$-dimensional hypercube of $P_2$. Every vertex at a given distance from the input vertex is equivalent, and can thus be partitioned into a single node (the distance partition). One thus has a chain-like structure of $D+1$ nodes, where the node occupancies are $\binom{D}{n}$ for $n=0,\ldots,D$. The quotient graph is exactly the standard perfect state transfer chain with engineered couplings $\sqrt{(n+1)(D-n)}$. This is exactly the construction originally performed in \cite{christandl2005}. The special case of $D=3$ is depicted in Fig.\ \ref{fig:P2D3}. While this example does not demonstrate any gain under the manipulation rules, $D=6$ does:
\begin{center}
\begin{adjustbox}{width=0.5\textwidth}
\begin{tikzpicture}
\draw (0,0) node[circle,style={fill=black,minimum width=0.8cm,text=white}]{$1$} -- (2,0) node[circle,style={fill=black,minimum width=0.8cm,text=white}]{$6$} -- (4,0) node[circle,style={fill=black,minimum width=0.8cm,text=white}]{$15$} -- (6,0) node[circle,style={fill=black,minimum width=0.8cm,text=white}]{$20$} -- (8,0) node[circle,style={fill=black,minimum width=0.8cm,text=white}]{$15$} -- (10,0) node[circle,style={fill=black,minimum width=0.8cm,text=white}]{$6$} -- (12,0) node[circle,style={fill=black,minimum width=0.8cm,text=white}]{$1$};
\draw (0.6,0.2) node{$6$};
\draw (1.4,0.2) node{$1$};
\draw (2.6,0.2) node{$5$};
\draw (3.4,0.2) node{$2$};
\draw (4.6,0.2) node{$4$};
\draw (5.4,0.2) node{$3$};
\draw (11.4,0.2) node{$6$};
\draw (10.6,0.2) node{$1$};
\draw (9.4,0.2) node{$5$};
\draw (8.6,0.2) node{$2$};
\draw (7.4,0.2) node{$4$};
\draw (6.6,0.2) node{$3$};

\draw [thick,->] (6,-0.7) -- (6,-1.3);

\draw (0,-2) node[circle,style={fill=black,minimum width=0.8cm,text=white}]{$1$} -- (2,-2) node[circle,style={fill=black,minimum width=0.8cm,text=white}]{$6$} -- (4,-2) node[circle,style={fill=black,minimum width=0.8cm,text=white}]{$15$} -- (6,-2) node[circle,style={fill=black,minimum width=0.8cm,text=white}]{$5$} -- (8,-2) node[circle,style={fill=black,minimum width=0.8cm,text=white}]{$15$} -- (10,-2) node[circle,style={fill=black,minimum width=0.8cm,text=white}]{$6$} -- (12,-2) node[circle,style={fill=black,minimum width=0.8cm,text=white}]{$1$};
\draw (0.6,-1.8) node{$6$};
\draw (1.4,-1.8) node{$1$};
\draw (2.6,-1.8) node{$5$};
\draw (3.4,-1.8) node{$2$};
\draw (4.6,-1.8) node{$2$};
\draw (5.4,-1.8) node{$6$};
\draw (11.4,-1.8) node{$6$};
\draw (10.6,-1.8) node{$1$};
\draw (9.4,-1.8) node{$5$};
\draw (8.6,-1.8) node{$2$};
\draw (7.4,-1.8) node{$2$};
\draw (6.6,-1.8) node{$6$};
\end{tikzpicture}
\end{adjustbox}
\end{center}
The most impressive gains that we found for this construction are for $D=16$, where we reduce from $2^{16}$ vertices to just 8874 ($\eta=0.820$). Sec.\ \ref{sec:P3} will demonstrate further gains by using different partitions.

\subsection{The \texorpdfstring{$\Delta$}{Delta}-doubling Lift}

In order to reduce the number of vertices as much as possible, it seems sensible to start from the families with smallest total vertex number: the $P_3$ hypercubes. However, since these have $\Delta=2$, they can only transfer over even distances. If we want odd distances, then we need to move to $\Delta=4$. Thankfully, there is a connection. We can move from any graph with a bipartite node structure and a given value of $\Delta$ to a graph with a value of $2\Delta$, using the same node structure.

If the partitioned structure is bipartite, then to every node we can assign a label 0 or 1 based on the parity of distance from the input vertex. The input and output vertices can be labelled 0 without loss of generality. Our $\Delta$-doubling lift simply changes the node occupancy of a node with label 1 by doubling it, while degrees on the nodes with label 0 are doubled. This maintains the consistency condition while the effect in the graph quotient is to change every edge by a factor of $\sqrt{2}$. Hence all the eigenvalues are multiplied by $\sqrt{2}$, and $\Delta$ is thus doubled. For an example, see Fig.\ \ref{fig:doubledown}. This is very similar to a trick used in \cite{kay2005}.

\begin{figure}
\begin{center}
\begin{adjustbox}{width=0.48\textwidth}
\begin{tikzpicture}
\draw (0,0) node[circle,style={fill=black,minimum width=0.8cm,text=white}]{$1$} -- (2,0) node[circle,style={fill=black,minimum width=0.8cm,text=white}]{$1$} -- (4,0) node[circle,style={fill=black,minimum width=0.8cm,text=white}]{$1$};
\draw (0.6,0.2) node{$1$};
\draw (1.4,0.2) node{$1$};
\draw (2.6,0.2) node{$1$};
\draw (3.4,0.2) node{$1$};

\draw [thick,->] (5,0) -- (6,0);

\draw (7,0) node[circle,style={fill=black,minimum width=0.8cm,text=white}]{$1$} -- (9,0) node[circle,style={fill=black,minimum width=0.8cm,text=white}]{$2$} -- (11,0) node[circle,style={fill=black,minimum width=0.8cm,text=white}]{$1$};
\draw (7.6,0.2) node{$2$};
\draw (8.4,0.2) node{$1$};
\draw (9.6,0.2) node{$1$};
\draw (10.4,0.2) node{$2$};
\end{tikzpicture}
\end{adjustbox}
\end{center}
\caption{The path $P_3$ has $\Delta=2$. Applying the $\Delta$-doubling lift produces $P_2^{\otimes 2}$ ($\Delta=4$).}
\label{fig:doubledown}
\end{figure}
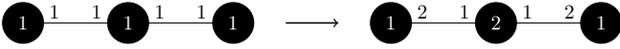

It must also be emphasised that the $\Delta$-doubling lift is highly compatible with the manipulation rule of Sec.\ \ref{sec:manipulate} for further reductions. For nodes with the 1 label, the neighbouring degrees are even due to the construction, and the node occupancy is already even, so it only needs an extra factor of 2 to be applicable (modulo some additional care with the conditions on the degree).

\subsection{Node Splitting}

While we have specified one manipulation rule that preserves the node structure, other manipulation rules can be formulated which, for example, show how a single node can be broken into several smaller ones, under limited conditions. (Technically, the method we are about to present contains the previous manipulation rule.)

Consider a node with occupation $N$, and $p$ neighbouring nodes with occupations $M_j$ for $j=1,\ldots,p$, and coupled (in the graph quotient) by coupling strength $J_j$. We split the single node into a set of $k$ nodes (the value of $k$ to be determined), each with occupancy $N_i$. The degrees of each node when connecting to the $p$ neighbours are $\{d^{i}_j\}$. Our consistency condition is as before: $N_id^{i}_j$ should be divisible by $M_j$, and the effective coupling strength of the graph quotient is correctly maintained if
$$
M_jJ_j^2=\sum_iN_i{d_j^{i}}^2
$$
We also need to ensure that a uniform superposition over all of the vertices in any of the neighbouring nodes hops onto the same state across the new set of nodes. This is achieved by ensuring that $$d^{i}_j=\alpha_jd^{i}_1$$ for all $i,j$. In turn, this means
$$
\alpha_j=\sqrt{\frac{M_kJ_j^2}{M_1J_1^2}}.
$$
This number is rational, and we refer to it in its lowest possible form as $\alpha_j=\frac{n_j}{d_j}$. For this to work, $d^{i}_1$ must be divisible by all the $d_j$. Also, to achieve the consistency condition that $N_id^{i}_j$ should be divisible by $M_j$, we need $N_id^{i}_1$ to be divisible by $M_jd_j/\text{GCD}(M_j,n_j)$ for all $j=2,3,\ldots,p$ (as well as $M_1$). Subject to these constraints, our aim is to minimise $\sum_iN_i$.

The simplest example of this in practice is the $D=4$ transfer case, starting from the $P_2$ hypercube construction. We have the partitioned graph:
\begin{center}
\begin{adjustbox}{width=0.3\textwidth}
\begin{tikzpicture}
\draw (0,0) node[circle,style={fill=black,minimum width=1,text=white}]{$1$} -- (2,0) node[circle,style={fill=black,minimum width=1,text=white}]{$4$} -- (4,0) node[circle,style={fill=black,minimum width=1,text=white}]{$6$} -- (6,0) node[circle,style={fill=black,minimum width=1,text=white}]{$4$} -- (8,0) node[circle,style={fill=black,minimum width=1,text=white}]{$1$};
\draw (1,0.2) node{$\sqrt{4}$};
\draw (3,0.2) node{$\sqrt{6}$};
\draw (5,0.2) node{$\sqrt{6}$};
\draw (7,0.2) node{$\sqrt{4}$};
\end{tikzpicture}
\end{adjustbox}
\end{center}
We aim to replace the central node with a set of $k$ nodes. By symmetry, we can take $p=1$. Thus, we only have to find $\min\sum_iN_i$ subject to $\sum_iN_i{d^{i}}^2=24$ and $N_id^{i}$ divisible by 4. Of course, the original values still work ($k=1, N_1=6, d^1=2$), but there is another solution: $k=2, N_1=1, d^1=4, N_2=2, d^2=2$. This reduces the vertex count ($N_1+N_2=3<6$), and returns the $D=4$ graph of 13 vertices shown in Fig.\ \ref{fig:coutinho}, originally due to Coutinho \cite{coutinho2016a}.
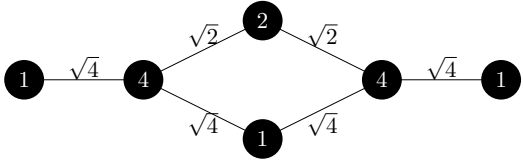
\begin{figure}
\begin{center}
\begin{adjustbox}{width=0.4\textwidth}
\begin{tikzpicture}
\draw (2,0) -- (4,-1) node[circle,style={fill=black,minimum width=1,text=white}]{$1$} -- (6,0);
\draw (0,0) node[circle,style={fill=black,minimum width=1,text=white}]{$1$} -- (2,0) node[circle,style={fill=black,minimum width=1,text=white}]{$4$} -- (4,1) node[circle,style={fill=black,minimum width=1,text=white}]{$2$} -- (6,0) node[circle,style={fill=black,minimum width=1,text=white}]{$4$} -- (8,0) node[circle,style={fill=black,minimum width=1,text=white}]{$1$};
\draw (1,0.2) node{$\sqrt{4}$};
\draw (3,0.75) node{$\sqrt{2}$};
\draw (5,0.75) node{$\sqrt{2}$};
\draw (3,-0.75) node{$\sqrt{4}$};
\draw (5,-0.75) node{$\sqrt{4}$};
\draw (7,0.2) node{$\sqrt{4}$};
\end{tikzpicture}
\end{adjustbox}
\end{center}
\caption{A particularly small perfect transfer graph of distance 4, initially due to Coutinho \cite{coutinho2016a}.}\label{fig:coutinho}
\end{figure}

One way to understand how this example works is that our partitioned graph is entirely agnostic to the specific way that vertices in the two nodes are connected up. This is what \cite{christandl2005} referred to as a ``scrambled hypercube''. We can choose to group these so that there's some extra structure (for the simple case below, it doesn't matter how you scramble it).
\begin{center}
\begin{adjustbox}{width=0.4\textwidth}
\begin{tikzpicture}
\draw (2,0) node[circle,style={fill=black,minimum width=1,text=white}]{$4$} -- (4,0) node[circle,style={fill=black,minimum width=1,text=white}]{$6$} -- (6,0) node[circle,style={fill=black,minimum width=1,text=white}]{$4$};
\draw (7,0) node{$\equiv$};
\fill[black] (8,1.5) circle(0.3);
\fill[black] (8,0.5) circle(0.3);
\fill[black] (8,-0.5) circle(0.3);
\fill[black] (8,-1.5) circle(0.3);

\fill[black] (10,2.5) circle(0.3);
\fill[black] (10,1.5) circle(0.3);
\fill[black] (10,0.5) circle(0.3);
\fill[black] (10,-0.5) circle(0.3);
\fill[black] (10,-1.5) circle(0.3);
\fill[black] (10,-2.5) circle(0.3);

\fill[black] (12,1.5) circle(0.3);
\fill[black] (12,0.5) circle(0.3);
\fill[black] (12,-0.5) circle(0.3);
\fill[black] (12,-1.5) circle(0.3);

\draw [thick] (8,1.5) -- (10,2.5) -- (8,0.5) -- (10,-0.5) -- (8,-0.5) -- (10,1.5) -- (8,1.5) -- (10,0.5) -- (8,-1.5) -- (10,-2.5) -- (8,-0.5); 
\draw [thick] (8,0.5) -- (10,-1.5) -- (8,-1.5);

\draw [thick] (12,1.5) -- (10,2.5) -- (12,0.5) -- (10,-0.5) -- (12,-0.5) -- (10,1.5) -- (12,1.5) -- (10,0.5) -- (12,-1.5) -- (10,-2.5) -- (12,-0.5); 
\draw [thick] (12,0.5) -- (10,-1.5) -- (12,-1.5);
\end{tikzpicture}
\end{adjustbox}
\end{center}
In the central column, we can partition the nodes, merging together the top and bottom vertices in one node (because, between them, they have degree 2 going each way, and the neighbouring node has degree 1), and the other vertices in a single node,
\begin{center}
\begin{adjustbox}{width=0.2\textwidth}
\begin{tikzpicture}
\draw [thick] (2,0) node[circle,style={fill=black,minimum width=1,text=white}]{$4$} -- (4,1) node[circle,style={fill=black,minimum width=1,text=white}]{$2$} -- (6,0) node[circle,style={fill=black,minimum width=1,text=white}]{$4$} -- (4,-1) node[circle,style={fill=black,minimum width=1,text=white}]{$4$} -- (2,0);
\draw (2.3,0.4) node{$1$};
\draw (2.3,-0.4) node{$2$};
\draw (3.6,1) node{$2$};
\draw (3.7,-0.6) node{$2$};
\end{tikzpicture}
\end{adjustbox}
\end{center}
Applying our manipulation rule to the lower node yields:
\begin{center}
\begin{adjustbox}{width=0.2\textwidth}
\begin{tikzpicture}
\draw [thick] (2,0) node[circle,style={fill=black,minimum width=1,text=white}]{$4$} -- (4,1) node[circle,style={fill=black,minimum width=1,text=white}]{$2$} -- (6,0) node[circle,style={fill=black,minimum width=1,text=white}]{$4$} -- (4,-1) node[circle,style={fill=black,minimum width=1,text=white}]{$1$} -- (2,0);
\draw (2.3,0.4) node{$1$};
\draw (2.3,-0.4) node{$1$};
\draw (3.6,1) node{$2$};
\draw (3.7,-0.6) node{$4$};
\end{tikzpicture}
\end{adjustbox}
\end{center}

\begin{lemma}\label{lem:explicit}
For $\Delta=4$, transfer over distances 4 and 5, the example of Coutinho, and its Cartesian product with $P_2$, are the graphs with the minimum number of vertices such that the graph still has a graph quotient coinciding with the standard perfect state transfer chain.
\end{lemma}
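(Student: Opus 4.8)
The plan is to reduce any candidate graph to its distance partition and exploit the fact that fixing the quotient to be the standard chain fixes both the effective couplings $J_n=\sqrt{(n+1)(D-n)}$ and, through the low moments of $A$, the coarse vertex budget at each distance. Write $L_m$ for the set of vertices at distance $m$ from the input $a$ and $n_m=|L_m|$. First I would pin down the forced levels: since $\deg(a)=\bra a A^2\ket a=J_0^2=D$ and every neighbour of $a$ lies in $L_1$, we get $n_1=D$ exactly, and by the reversal symmetry of the standard chain $n_{D-1}=D$ as well. Thus $n_1=n_3=4$ for $D=4$ and $n_1=n_4=5$ for $D=5$, with $n_0=n_D=1$.

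The heart of the $D=4$ bound is a single walk-counting identity that is completely insensitive to how the graph is internally partitioned or split. Because the quotient is fixed, $\bra a A^4\ket a=J_0^2(J_0^2+J_1^2)=40$. Splitting the closed $4$-walks from $a$ by their midpoint gives $\bra a A^4\ket a=\deg(a)^2+\sum_{w\in L_2}\deg_{L_1}(w)^2$, where $\deg_{L_1}(w)$ is the number of edges from $w$ to $L_1$; hence $\sum_{w\in L_2}\deg_{L_1}(w)^2=24$, with each term a square in $\{1,4,9,16\}$ (the degrees are at least $1$ and at most $n_1=4$). A short sum-of-squares check shows $24$ is neither a single such square nor a sum of two, but is $16+4+4$, so $n_2\ge3$. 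Therefore $N\ge 1+4+3+4+1=13$, and since Coutinho's graph (Fig.\ \ref{fig:coutinho}) has distance-partition occupancies $(1,4,3,4,1)$ it attains the bound. I would emphasise that this argument never invokes the node-splitting rules, so it rules out \emph{all} standard-chain realisations, not merely those reachable by our manipulations.

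For $D=5$ the same identity gives $\bra a A^4\ket a=5\cdot13=65$, so $\sum_{w\in L_2}\deg_{L_1}(w)^2=40$ with degrees at most $5$; but this alone only forces $n_2\ge4$ (e.g.\ degrees $4,4,2,2$), and the mirror identity from $b$ only gives $n_3\ge4$, for a total of $20<26$. Closing the remaining gap is the main obstacle, and the difficulty is structural: the interior of the $D=5$ chain is genuinely two-dimensional, and its minimal realisation (the product below) is \emph{not} a single split node -- one checks that the direction-preserving condition of the node-splitting rule would require $\sqrt{9n_3/40}$ to be rational, forcing $n_3$ to be ten times a perfect square, which is incompatible with $n_3=7$. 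Consequently no clean per-level integer program isolates $n_2$ and $n_3$. My plan here is to bring in the sixth moment $\bra a A^6\ket a$, which couples $n_3$ and the $L_2$--$L_3$ incidence structure into the system, together with the $L_1$--$L_2$ relation, its mirror on the $b$-side, and the integrality and consistency constraints, and then to run the resulting finite case analysis (equivalently, a bounded exhaustive search pruned by these moment conditions) to establish $n_2,n_3\ge7$, hence $N\ge 1+5+7+7+5+1=26$.

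Achievability for $D=5$ then follows from the product construction: the Cartesian product of Coutinho's graph with $P_2$ preserves $\Delta=4$ and raises the distance to $5$ (by the $G_1\square G_2$ result quoted in Sec.\ \ref{sec:vertex}), has distance-partition occupancies $(1,5,7,7,5,1)$ and $N=26$, and its quotient is the Cartesian product of the standard $D=4$ chain (the quotient of Coutinho's graph) with $P_2$, i.e.\ the standard $D=5$ chain. Matching the lower bound, this identifies Coutinho $\square\,P_2$ as the minimiser and completes the proof.
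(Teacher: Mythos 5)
Your $D=4$ argument is correct, complete, and in fact tidier than the paper's. The paper fixes the column occupancies $(1,4,k,4,1)$ and then brute-forces $0$--$1$ biadjacency matrices $\tilde A\in\{0,1\}^{k\times 4}$ subject to $\tilde A^T\tilde A\,\mathbf{1}=J^2\mathbf{1}$, checking that $k<3$ is infeasible; your walk-counting identity $\bra{a}A^4\ket{a}=\deg(a)^2+\sum_{w\in L_2}\deg_{L_1}(w)^2$ is precisely the aggregate version $\mathbf{1}^T\tilde A^T\tilde A\,\mathbf{1}=DJ^2=24$ of that condition, and the observation that $24$ is not a sum of one or two squares from $\{1,4,9,16\}$ replaces the matrix search by a closed-form argument, with the same conclusion $n_2\geq 3$, $N\geq 13$. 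One step you should make explicit: your identity silently assumes no intra-level edges; this does follow from the hypothesis, since the standard chain has zero diagonal, so $\bra{e_m}A\ket{e_m}=0$ on the strictly positive Krylov vectors forces all edges to join consecutive levels. Your achievability discussion (including that the Krylov quotient of Coutinho$\,\square\, P_2$ is the standard $D=5$ chain, with occupancies $(1,5,7,7,5,1)$) is also sound.

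The genuine gap is the $D=5$ lower bound, which is the harder half of the lemma and which you leave as a plan rather than a proof. The paper, terse as it is, does execute a concrete finite check: for columns $(1,5,k,k,5,1)$ it verifies by brute force that $k<7$ admits no valid $0$--$1$ incidence structure. Your proposal to prune a search by the sixth moment is unexecuted, and, more importantly, you have not shown that moment conditions of any fixed order suffice: your constraints are aggregate, whereas the true feasibility problem is per-vertex and involves both interfaces jointly. Concretely, even the per-vertex fourth-order condition $\sum_{w\sim u}\deg_{L_1}(w)=8$ for all $u\in L_1$ (strictly stronger than your $\sum_w \deg_{L_1}(w)^2=40$) is satisfiable with $n_2=4$ and degree vector $d=(4,4,2,2)$ --- take three $u$'s adjacent to both degree-$4$ vertices and two $u$'s adjacent to one degree-$4$ and both degree-$2$ vertices. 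This candidate is killed only at the $L_2$--$L_3$ interface, where the mirror-symmetric middle coupling $J_2=3$ requires a $0$--$1$ matrix $B$ with $Bd=3d$ and $B^Td=3d$, which is infeasible: the degree-$4$ rows force all-ones rows while the degree-$2$ rows cannot then cover the degree-$4$ columns. So the real exclusion argument must couple the two interfaces, exactly what the paper's brute force does; until you formulate and run that joint search, $n_2=n_3\geq 7$ and hence $N\geq 26$ is not established. Finally, the aside claiming the node-splitting rule would need $\sqrt{9n_3/40}$ rational is garbled and carries no weight; it should be cut rather than repaired.
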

\begin{proof}
We start from the engineered chain that we want as the graph quotient, knowing that the first and last vertices must correspond to single qubits. For that to be possible, the second and penultimate qubits must correspond to a number of vertices in the graph equal to the transfer distance. These statements apply to all possible distances. Now we must work out what the possible correspondences are for the remaining column(s). Due to symmetry, both cases effectively only have one column to take care of. We consider the coupling between adjacent columns. Let $\tilde A\in\{0,1\}^{k\times D}$, where $k$ is to be determined. We require that $\tilde A^T\tilde A$ has an all-ones eigenvector with eigenvalue $J^2$, where $J=\sqrt{2(D-1)}$ is the appropriate coupling strength in the quotient graph. It is straightforward to check by brute force that for $D=4$, $k<3$ is not admissible. Similarly for $D=5$, $k<7$ is not admissible. The smallest remaining values of $k$ match the given constructions. 
\end{proof}

This is important because it shows that, for these transfer distances, it is the lower bounds that are lacking, not the quality of the construction. The best known lower bounds (in terms of scaling) on vertex number are polynomial in $D$ \cite{coutinho2018}. In fact, those bounds can be slightly improved in the present context of extremal transfer:
$$
2m=\text{Tr}(A^2)\geq\sum_{\lambda\in\Phi_a}\lambda^2\geq \min_q\sum_{n=1}^{\epsilon_A+1}\left(\frac{2\pi n}{\tau}-q\right)^2
$$
where $m$ is the number of edges in the graph, and $2t_0=\tau\leq 2\pi$ is the perfect revival time. We know that every eigenvalue is separated by an integer multiple of $2\pi/\tau$, and $q$ is just an offset that we can optimise over. Hence,
$$
2m\geq \frac{\epsilon_A(\epsilon_A+1)(\epsilon_A+2)}{12}\geq \frac{D(D+1)(D+2)}{12},
$$
and the number of vertices $N$, is related by $\binom{N}{2}\geq m$ if we want to be completely general. For a fixed maximum degree, we have $Nd\geq 2m$, and we generally expect $d\sim D$. Note that this bound is so weak that for $D=4$, all it tells us is that $N\geq 3$, but clearly $N\geq 5$ to achieve distance 4!.


\section{The \texorpdfstring{$P_3$}{P3} Hypercube Partition}\label{sec:P3}

\subsection{A First Example} \label{sec:first}

Our goal is to reduce the vertex count by considering other equitable distance partitions. In particular, we will take inspiration from the structure of the $P_3$ hypercubes (dimension $D/2$ for even $D$). We will start by explicitly considering the simplest such case, where $D=4$.
\begin{center}
\begin{adjustbox}{width=0.15\textwidth}
\begin{tikzpicture}
\foreach \x in {1,...,3} {
  \draw[black,thick] (1.5*\x,1.5) -- (1.5*\x,4.5);
  \draw[black,thick] (1.5,1.5*\x) -- (4.5,1.5*\x);
  \foreach \y in {1,...,3}
    {
    	\fill[black] (1.5*\x cm,1.5*\y) circle(0.5);
    }
}
\end{tikzpicture}
\end{adjustbox}
\end{center}
Let us take the bottom left-hand corner as the input vertex, and the top right-hand corner as the output vertex. The vertices at distance 1 from the input are all equivalent, as are those at distance 3. At distance 2, the central vertex is distinct from the two corner vertices. Thus, we have a partitioning of the form
\begin{center}
\begin{adjustbox}{width=0.15\textwidth}
\begin{tikzpicture}
\foreach \x in {1,...,3} {
  \draw[black,thick] (1.5*\x-3,1.5) -- (1.5*\x-3,1.5*\x-3);
  \draw[black,thick] (-1.5,1.5*\x-3) -- (1.5*\x-3,1.5*\x-3);
  \foreach \y in {1,...,\x}
    {
    	\fill[black] (3cm-1.5*\x cm,3-1.5*\y) circle(0.5);
    }
}
\matrix[matrix of nodes,nodes={inner sep=0pt,text width=1cm,align=center,minimum height=1cm,white},row sep=0.5cm, column sep=0.5cm]{
 1 & 2 & 1 \\
 2 & 2 \\
 1 \\};
\end{tikzpicture}
\end{adjustbox}
\end{center}
There are no advantages to be gained by applying manipulation rules. However, the $\Delta$-doubling lift returns Fig.\ \ref{fig:coutinho}.

\subsection{General Case}

We shall now describe the general version of this construction. For a $D/2$-dimensional hypercube of $P_3$, each vertex can be labelled by a value $x\in\{0,1,2\}^{D/2}$, where $(0,0,\ldots 0)$ is the input vertex and $(2,2,\ldots,2)$ is the output vertex. The distance from the input vertex is given by $w_x=\sum_{i=1}^{D/2}x_i$. We proceed by observing that all vertices with the same number of 0s in their label, $n_0$, and 2s, $n_2$, are equivalent and can thus be grouped together in a node of the partition. There are
\begin{equation}
\frac{\frac{D}{2}!}{n_0!n_2!(D/2-n_0-n_2)!}	\label{eqn:occupancy}
\end{equation}
such vertices in each node. These have clearly identifiable connectivities. For example, every vertex in the $(n_0,n_2)$ node connects to $n_0$ vertices in the $(n_0-1,n_2)$ node (as there are $n_0$ choices of which 0 to turn into a 1), and $\frac{D}2-n_0-n_2$ vertices in the $(n_0,n_2+1)$ node (changing a 1 into a 2). We thus make the observation that the nodes form a graph that corresponds to half a square lattice, and along each row and column, the corresponding graph quotient has coupling strengths equal to those of the perfect transfer chain of the appropriate length!
\begin{figure*}
\begin{adjustbox}{width=0.8\textwidth}
\begin{tikzpicture}
\foreach \x in {1,...,17} {
  \draw[black,thick] (-12,13.5-1.5*\x) -- (13.5-1.5*\x,13.5-1.5*\x);
  \draw[black,thick] (13.5-1.5*\x,12) -- (13.5-1.5*\x,13.5-1.5*\x);
  \foreach \y in {1,...,\x}
    {
    	\fill[black] (13.5cm-1.5*\x cm,13.5-1.5*\y) circle(0.5);
    }
}
\matrix[matrix of nodes,nodes={inner sep=0pt,text width=1cm,align=center,minimum height=1cm,white},row sep=0.5cm, column sep=0.5cm]{
 4 & 4 & 30 & 35 & 455 & 273 & 2002 & 715 & 1430 & 715 & 2002 & 273 & 455 & 140 & 120 & 16 & 1 \\
 4 & 60 & 105 & 455 & 1365 & 3003 & 5005 & 715 & 715 & 5005 & 3003 & 1365 & 1820 & 420 & 240 & 16  \\
 30 & 105 & 2730 & 2730 & 30030 & 15015 & 10010 & 715 & 10010 & 15015 & 30030 & 2730 & 2730 & 420 & 120  \\
 35 & 455 & 2730 & 10010 & 1001 & 5005 & 15015 & 15015 & 5005 & 1001 & 10010 & 2730 & 1820 & 140  \\
 455 & 1365 & 30030 & 1001 & 1001 & 10010 & 2145 & 10010 & 1001 & 1001 & 30030 & 1365 & 455  \\
 273 & 3003 & 15015 & 5005 & 10010 & 286 & 286 & 10010 & 5005 & 15015 & 3003 & 273  \\
 2002 & 5005 & 10010 & 15015 & 2145 & 286 & 2145 & 15015 & 10010 & 5005 & 2002  \\
 715 & 715 & 715 & 15015 & 10010 & 10010 & 15015 & 715 & 715 & 715  \\
 1430 & 715 & 10010 & 5005 & 1001 & 5005 & 10010 & 715 & 1430  \\
 715 & 5005 & 15015 & 1001 & 1001 & 15015 & 5005 & 715  \\
 2002 & 3003 & 30030 & 10010 & 30030 & 3003 & 2002  \\
 273 & 1365 & 2730 & 2730 & 1365 & 273  \\
 455 & 1820 & 2730 & 1820 & 455  \\
 140 & 420 & 420 & 140  \\
 120 & 240 & 120 \\
 16 & 16  \\
 1 \\};
\end{tikzpicture}
\end{adjustbox}
\caption{The $\Delta=2$, distance 32 transfer graph. These numbers are the number of vertices. Connections are with nearest neighbours of the grid, with degrees determined by the effective coupling strength of the edge: $J=\sqrt{n(k-n)}$ between vertices $n,n+1$ in a row/column of $k$ nodes: the product of the in and out degrees along an edge is equal to $J^2$, and, for a given edge, the total number of incoming connections is equal to the total number of outgoing ones.}\label{fig:crowningglory}
\end{figure*}
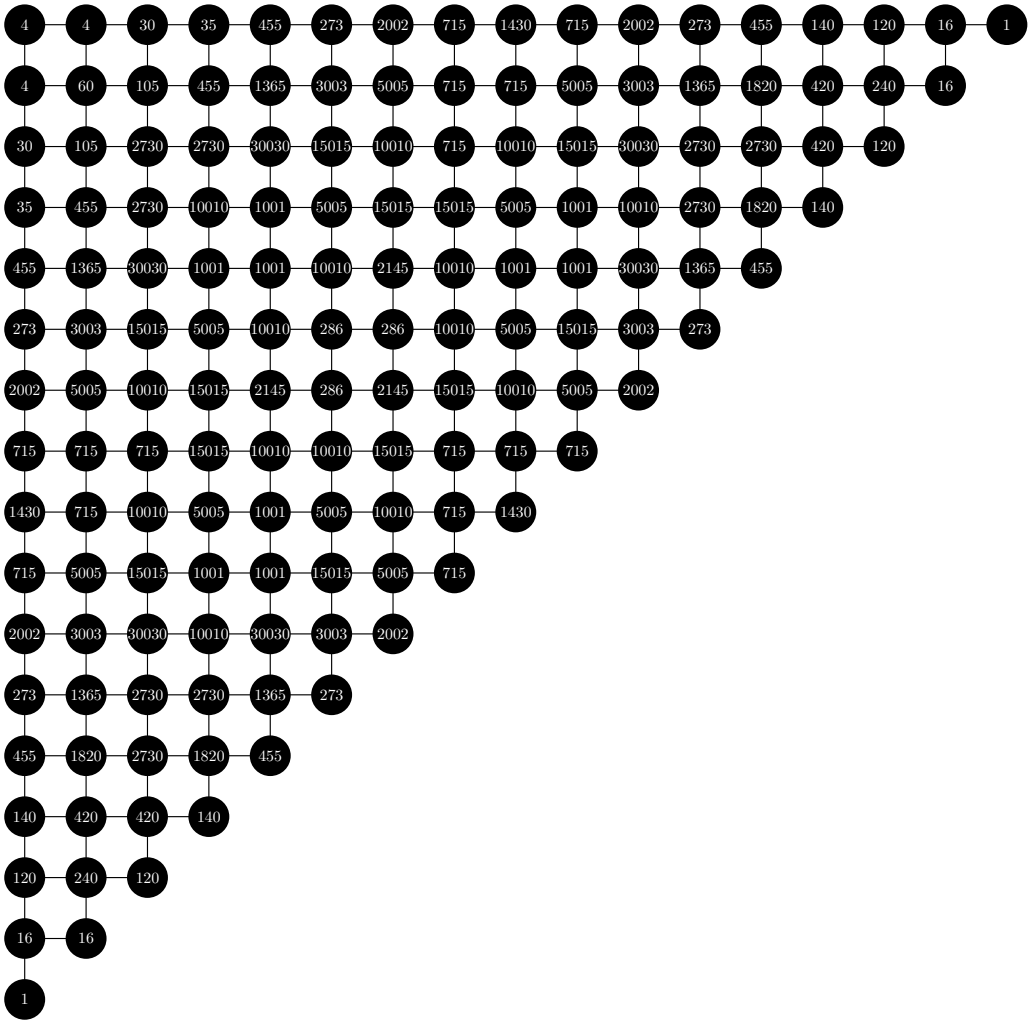
To demonstrate this on a modestly sized graph, consider $D=10$. The node structure is
\begin{center}
\begin{adjustbox}{width=0.48\textwidth}
\begin{tikzpicture}
\foreach \x in {1,...,6} {
  \draw[black,thick] (-3.75,5.25-1.5*\x) -- (5.25-1.5*\x,5.25-1.5*\x);
  \draw[black,thick] (5.25-1.5*\x,3.75) -- (5.25-1.5*\x,5.25-1.5*\x);
  \foreach \y in {1,...,\x}
    {
    	\fill[black] (5.25cm-1.5*\x cm,5.25-1.5*\y) circle(0.5);
    }
}
\matrix[matrix of nodes,nodes={inner sep=0pt,text width=1cm,align=center,minimum height=1cm,white},row sep=0.5cm, column sep=0.5cm]{
 1 & 5 & 10 & 10 & 5 & 1 \\
 5 & 20 & 30 & 20 & 5 \\
 10 & 30 & 30 & 10 \\
 10 & 20 & 10 \\
 5 & 5 \\
 1 \\};

 \foreach \x in {1,...,6} {
  \draw[black,thick] (5,5.25-1.5*\x) -- (14-1.5*\x,5.25-1.5*\x);
  \draw[black,thick] (14-1.5*\x,3.75) -- (14-1.5*\x,5.25-1.5*\x);
  \foreach \y in {1,...,\x}
    {
    	\fill[black] (14cm-1.5*\x cm,5.25-1.5*\y) circle(0.2);
    }
}
\matrix[matrix of nodes,nodes={inner sep=0pt,text width=1cm,align=center,minimum height=1cm,black},row sep=0.5cm, column sep=0.5cm] at (8.75,1) {
 $\sqrt{5\cdot 1}$ & $\sqrt{4\cdot 2}$ & $\sqrt{3\cdot 3}$ & $\sqrt{2\cdot 4}$ & $\sqrt{1\cdot 5}$ \\
 $\sqrt{4\cdot 1}$ & $\sqrt{3\cdot 2}$ & $\sqrt{2\cdot 3}$ & $\sqrt{1\cdot 4}$ \\
 $\sqrt{3\cdot 1}$ & $\sqrt{2\cdot 2}$ & $\sqrt{1\cdot 3}$ \\
 $\sqrt{2\cdot 1}$ & $\sqrt{1\cdot 2}$ \\
 1 \\};
\end{tikzpicture}
\end{adjustbox}
\end{center}
The vertices with occupancy 20 can be reduced to 5 by the manipulation rule. To the side, the graph quotient is depicted. The effective coupling strengths are shown for the horizontal couplings. Nodes are arranged such that $n_0$ decreases vertically, and $n_2$ increases horizontally, meaning that state transfer is from bottom left $(n_0,n_2)=(5,0)$ to top right, $(0,5)$. The $\Delta=2$ graph contains 198 vertices, while the $\Delta=4$ version requires 199 vertices.

To have the best chance of the manipulation rule being applicable, we want multiple factors of 2 to appear in the node occupancies given by Eq.\ (\ref{eqn:occupancy}). If $D$ is a power of 2, there are only 6 nodes whose occupancies are not divisible by 4: $(n_0,n_2)=(D/2,0),(0,D/2),(0,0),(D/4,0),(0,D/4),(D/4,D/4)$. Modulo a few edge effects due to the need for even degrees, we can therefore apply the grouped-node version of the manipulation rule on almost the entire graph! Moreover, if either $n_0$ or $n_2$ is odd, the final occupancy is still even, which is perfect for the $\Delta$-doubling lift! With some additional manual manipulation, we have achieved particularly drastic reductions for the $D=16,32$ cases. The distance 32, $\Delta=2$ transfer graph has 680913 vertices, an efficiency of $\eta=0.606$, and is depicted in Fig.\ \ref{fig:crowningglory}. The $\Delta$-doubling lift yields a $\Delta=4$ version with 830895 vertices, an efficiency of $\eta=0.615$.

Applying the node-splitting technique to the $D=32$ perfect transfer graph of Fig.\ \ref{fig:crowningglory} yields no improvement. For the $\Delta=4$ case, there is some minor improvement as the nodes at $(n_0,n_2)=(0,8)$, $(8,0)$ and $(8,8)$, each with node occupancy 1430, can be replaced by two nodes with occupancies 286 and 130. The total vertex count is therefore reduced to 827853, efficiency $\eta=0.614$.

\subsection{Sequential Improvements}

The Cartesian product $G^{\square k}$ of $G$ gives transfer over distances $kD$ with identical efficiency. However, can we do better? As we increase the transfer distance, can we reduce the vertex number compared to $|G|^k$? We will now give one mechanism that provides this, and estimate the revised efficiency that it gives.

For a set of nodes labelled by $i$ that make up graph $G$, $G^{\square 2}$ has a corresponding set of nodes labelled by $(i,j)$. Provided $i\neq j$, the nodes $(i,j)$ and $(j,i)$ are equivalent and can be replaced by a single node with double the occupancy. Nodes $(i,i)$ retain the same occupancy but their degrees double. If we repeat this to get $G^{\square 4}$, but with the new node structure, then the vast majority of nodes now contain a factor of 4 in their occupancies. We apply the manipulation rule on most of these vertices, reducing the vertex number. If we assume that the efficiency of $G$, with transfer distance $D$ was $\eta_D$, then the new efficiency is approximately
$$
\eta_{4D}\gtrsim\frac{1}{4D}\log_2\frac{2^{4D\eta_D}}{4}=\eta_D-\frac{1}{2D}.
$$
Then we repeat this to give
$$
\eta_{4^qD}\gtrsim\eta_D-\frac{2}{D}\sum_{k=1}^q\frac{1}{4^k}.
$$
In the large $q$ limit, we have
$$
\eta\rightarrow\eta_D-\frac{2}{3D}.
$$
This lets us estimate achievable efficiencies of about $\eta=0.584$ for even $D$ and $\eta=0.594$ for odd $D$, starting from the $D=32$ case.

\section{Conclusions \& Future Work}

We can achieve perfect state transfer over distance $D$ with a vertex count that scales as $2^{\eta D}$ where $\eta=0.606$ if $D$ is even, and $\eta=0.614$ if $D$ is odd. We have estimated that better efficiencies are achievable in a relatively straightforward manner, but for much larger graphs. We also know that if we wish to transfer over an odd distance, $D\leq d$, which is severely limiting for practical applications.


The big question that remains open is whether or not an exponential scaling in the number of vertices is necessary and, if so, can one bound from below the exponent? 
 Clearly the existing lower bounds are too weak because there are already large gaps between those bounds and the transfer graphs of distance 1 to 5 that we have shown to be optimal. One clear challenge is that existing lower bound techniques only take into account the space $\Phi_a$. Since we achieve effective coupling strengths other than unity in the quotient graph by having node occupancies larger than 1, there is generally a large space outside $\Phi_a$ whose size we currently have no way of estimating.

It helps to illustrate the difficulty in resolving the polynomial/exponential divide in vertex number by considering a family of graphs due to Stevanovi\'c \cite{Note1}. In our graph partition terminology, an example (from which the family may readily be extrapolated by looking at the pattern of alternate nodes) is

\begin{center}
\begin{adjustbox}{width=0.5\textwidth}
\begin{tikzpicture}
\draw (0,0) node[circle,style={fill=black,minimum width=1,text=white}]{$5$} -- (1.5,0) node[circle,style={fill=black,minimum width=1,text=white}]{$1$} -- (3,0) node[circle,style={fill=black,minimum width=1,text=white}]{$4$} -- (4.5,0) node[circle,style={fill=black,minimum width=1,text=white}]{$2$} -- (6,0) node[circle,style={fill=black,minimum width=1,text=white}]{$3$} -- (7.5,0) node[circle,style={fill=black,minimum width=1,text=white}]{$3$} -- (9,0) node[circle,style={fill=black,minimum width=1,text=white}]{$2$} -- (10.5,0) node[circle,style={fill=black,minimum width=1,text=white}]{$4$} -- (12,0) node[circle,style={fill=black,minimum width=1,text=white}]{$1$} -- (13.5,0) node[circle,style={fill=black,minimum width=1,text=white}]{$5$};
\draw (0.4,0.2) node{$1$};
\draw (1.1,0.2) node{$5$};
\draw (1.9,0.2) node{$4$};
\draw (2.6,0.2) node{$1$};
\draw (3.4,0.2) node{$2$};
\draw (4.1,0.2) node{$4$};
\draw (4.9,0.2) node{$3$};
\draw (5.6,0.2) node{$2$};
\draw (6.4,0.2) node{$3$};
\draw (7.1,0.2) node{$3$};
\draw (7.9,0.2) node{$2$};
\draw (8.6,0.2) node{$3$};
\draw (9.4,0.2) node{$4$};
\draw (10.1,0.2) node{$2$};
\draw (10.9,0.2) node{$1$};
\draw (11.6,0.2) node{$4$};
\draw (12.4,0.2) node{$5$};
\draw (13.1,0.2) node{$1$};
\end{tikzpicture}
\end{adjustbox}
\end{center}
These graphs exhibit perfect revivals, but do not exhibit perfect transfer, only because instead of every neighbouring gap in eigenvalues being an odd integer, there is a single gap that is an even integer -- our example above has eigenvalues $(\pm1,\pm2,\pm3,\pm4,\pm5)$, while a chain with spectrum $(0,\pm1,\pm2,\pm3,\pm4,\pm5)$ has perfect transfer. Otherwise, perfect transfer would be possible, and the construction only requires a number of vertices $N=\frac14(D+1)(D+3)$ that, crucially, is only quadratic in $D$.

Furthermore, we have concentrated exclusively on graphs whose ultimate quotient is just the perfect state transfer chain with engineered couplings $\sqrt{n(N-n)}$. There are myriad other analytic options \cite{albanese2004}. The basic constructions that yield these will typically involve more vertices, but it may be the case that there are multiple common factors, leaving them more amenable to the reduction techniques investigated here. The Stevanovi\'c family of graphs may provide insight here -- they are equivalent to chains with coupling strengths ($J_{2n}=\sqrt{n(N/2-n)}$ and $J_{2n+1}=\sqrt{(n+1)(N/2-n)})$ that have significant factors in common between consecutive edges, and we believe that this is responsible for facilitating such a massive reduction in vertex number. However, our experiments to try and replicate aspects of this structure have all resulted in graphs that do not possess perfect state transfer.

{\em Acknowledgements}: We would like to thank G.\ Coutinho for useful conversations. This work was supported by EPSRC grant EP/N035097/1, and partially conducted during the Algebraic Graph Theory \& Quantum Walks at the University of Waterloo, April 2018.

%

\end{document}